\newtheorem{theorem}{Theorem}
\providecommand{\tabularnewline}{\\}
\let\oldforeign@language\foreign@language
\DeclareRobustCommand{\foreign@language}[1]{%
	\lowercase{\oldforeign@language{#1}}}
\newcommand{\thickhline}{%
	\noalign {\ifnum 0=`}\fi \hrule height 1.5pt
	\futurelet \reserved@a \@xhline
}
\newcolumntype{"}{@{\hskip\tabcolsep\vrule width 1pt\hskip\tabcolsep}}
\begin{document}
%
\title{Can Attackers with Limited Information Exploit Historical Data to Mount Successful False Data Injection Attacks on Power Systems?}
%
%
%

\author{Jiazi~Zhang,~\IEEEmembership{Student~Member,~IEEE,}
        Zhigang~Chu,~\IEEEmembership{Student~Member,~IEEE,}
        Lalitha~Sankar,~\IEEEmembership{Senior~Member,~IEEE,}
        and~Oliver~Kosut,~\IEEEmembership{Member,~IEEE}
\thanks{The authors are with the School of Electrical, Computer, and Energy
	Engineering, Arizona State University, Tempe, AZ 85281 USA (e-mail:
	\{jzhan188,zchu2,lalithasankar,okosut\}@asu.edu).}}

%
%

\markboth{IEEE TRANSACTIONS ON POWER SYSTEMS}%
{Zhang \MakeLowercase{\textit{et al.}}: False Data Injection Attacks with Limited Information}
%



\maketitle

\begin{abstract}
	This paper studies physical consequences 
	of unobservable false data injection (FDI) attacks designed only with information inside a sub-network of the power system. The goal of this attack is to overload a chosen target line without being detected via measurements. To overcome the limited information, a multiple linear regression model is developed to learn the relationship between the external network and the attack sub-network from historical data.
	The worst possible consequences of such FDI attacks are evaluated by solving a bi-level optimization problem wherein the first level models the limited attack resources, while the second level formulates the system response to such attacks via DC optimal power flow (OPF). The attack model with limited information is reflected in the DC OPF formulation that only takes into account the system information for the attack sub-network. The vulnerability of this attack model is illustrated on the IEEE 24-bus RTS and IEEE 118-bus systems.
\end{abstract}

\begin{IEEEkeywords}
Cyber-physical system, Cyber-security, false data injection attacks, state estimation, multiple linear  regression, bi-level optimization.  
\end{IEEEkeywords}

%
\IEEEpeerreviewmaketitle
\global\long\def\figurename{Fig.}
\global\long\def\tablename{Table}

%
%
%
%
\section*{Nomenclature}
\hspace{-0.4cm}\textbf{Topologies}

	\begin{description}[leftmargin=1.8cm,style=multiline]
		\item[$\mathcal{E}$] the area outside $\mathcal{L}$, where the attacker has no knowledge, \textit{i.e.,} $\mathcal{E}=\mathcal{G}\setminus \mathcal{L}$; 
		\item[$\mathcal{G}$] the entire network of the test system;
		\item[$\mathcal{L}$] a sub-network of $\mathcal{G}$ bounded by load buses where the attacker has perfect knowledge inside it;
		\item[$\mathcal{S}$] a sub-graph of $\mathcal{G}$ bounded by load buses where the attacker may replace measurements inside it.
	\end{description}

	\hspace{-0.4cm} The characters $\mathcal{G}$, $\mathcal{S}$, $\mathcal{L}$, and $\mathcal{E}$ also represent the sets of buses inside the corresponding networks.
	
	\hspace{-0.4cm}\textbf{Sets}
		\begin{description}[leftmargin=1.8cm,style=multiline]
	\item[$\mathcal{B}$] the set of boundary buses in $\mathcal{L}$ such that each bus in $\mathcal{B}$ is connected to at least one bus in $\mathcal{E}$;
	\item[$\mathbb{C}$] the set of lines in $\mathcal{G}$ that are congested for each instance of the historical data, where $\mathbb{C}^+$ and $\mathbb{C}^-$ are the subsets in $\mathbb{C}$ for which the power flow directions are positive and negative, respectively;
	\item[$\mathcal{E}_M$] the set of buses with marginal generators in $\mathcal{E}$;
	\item[$\mathbb{G}_i$] the set of generators that are connected to bus $i$;
	\item[$\mathcal{I}$] the set of internal buses, \textit{i.e.,} $\mathcal{I}=\mathcal{L} \setminus \mathcal{B}$;	
	\item[$\mathbb{W}_i$] the set of lines that are connected to boundary bus $i$, where $\mathbb{W}_i^{\mathcal{L}}$ and $\mathbb{W}_i^{\mathcal{E}}$ are the subsets of lines located in $\mathcal{L}$ and $\mathcal{E}$, respectively;	
	\item[$\mathcal{Y}$] the subset of remaining buses in $\mathcal{E}$, \textit{i.e.,} $\mathcal{Y}=\mathcal{E}\setminus \mathcal{Z}$;
	\item[$\mathcal{Z}$] the subset of buses in $\mathcal{E}$, for which power injections remain constant in the historical data.
\end{description}

\hspace{-0.4cm}\textbf{Quantities \& Indices}
\begin{description}[leftmargin=1.8cm,style=multiline]
	\item[$l$] target line index;
	\item[$n_b$] number of buses in $\mathcal{G}$;
	\item[$n_{br}$] number of lines in $\mathcal{G}$;
	\item[$n_c$] number of lines in $\mathbb{C}$;
	\item[$n_g$] number of generators in $\mathcal{G}$;
	\item[$n_z$] number of measurements in $\mathcal{G}$;
	\item[$n_{\mathcal{B}}, n_{\mathcal{L}}, n_{\mathcal{E}},$ $n_{\mathcal{Y}}$] number of buses in sets $\mathcal{B}$, $\mathcal{L}$, $\mathcal{E}$, and $\mathcal{Y}$, respectively.
	\end{description}
	
	\hspace{-0.4cm}\textbf{Parameters}
		\begin{description}[leftmargin=1.8cm,style=multiline]
	\item[$A, B$] the $(n_c+1)\times n_{\mathcal{L}}$ and $(n_c+1)\times n_{\mathcal{Y}}$ coefficient matrices associated with $v_{\mathcal{L}}$ and $v_{\mathcal{Y}}$ in \eqref{eq:ABF}, respectively, \textit{i.e.,} $A=[K_{\mathbb{C}}^\mathcal{L}; \mathbf{1}^T]$ and $B=[K_{\mathbb{C}}^{\mathcal{Y}}; \mathbf{1}^T]$;
	\item[$C_g(\cdot)$] the quadratic cost function for generator $g$;
	\item[$d$] the constant vector in \eqref{eq:ABF} such that $d=[SP_{\mathbb{C},\text{max}}-K_{\mathbb{C}}^{\mathcal{Z}}v_{\mathcal{Z}}; -\mathbf{1}^T v_{\mathcal{Z}}]$; 
	\item[$G$] the $n_b \times n_g$ generator-to-bus connectivity matrix in $\mathcal{G}$; 
	\item[$H$] the $n_b \times n_b$ dependency matrix between power injection measurements and state variables in $\mathcal{G}$;
	\item[$I$] identical matrix;
	\item[$J_i$] the $1\times n_b$ row vector of the sum of row vectors in $K$ corresponding to lines in $\mathbb{W}_i^{\mathcal{E}}$, \textit{i.e.,} $J_i=\sum\limits_{k\in\mathbb{W}_i^{\mathcal{E}}}K_k$;
	\item[$K$] the $n_{br}\times n_b$ power transfer distribution factor (PTDF) matrix in $\mathcal{G}$;
	\item[$M$] a large constant;
	\item[$N_0$] the $l_0$-norm constraint limit;
	\item[$N_1$] the $l_1$-norm constraint limit;
	\item[$P_{\text{max}}$] the $n_{br}\times 1$ thermal limit vector in $\mathcal{G}$;
	\item[$P_D$] the $n_b\times 1$ real power load vector in $\mathcal{G}$;
	\item[$P_{G,\text{max}},$ $P_{G,\text{min}}$] the $n_g \times 1$ maximum and minimum generation limit vectors, respectively, in $\mathcal{G}$;
	\item[$S$] a $n_c \times n_c$ diagonal matrix with $S_{kk}=1$, $\forall k\in \mathbb{C}^+$, and $S_{kk}=-1$, $\forall k\in\mathbb{C}^-$;
	\item[$\Gamma$] the $n_{br}\times n_b$ dependency matrix between power flow measurements and voltage angle states in $\mathcal{G}$;
	\item[$\lambda_i$] the locational marginal price (LMP) at boundary bus $i$;	
	\item[$\tau$] the load shift factor;
	\item[$\zeta$] the weight of the norm of attack vector $c$.
\end{description}
	
\hspace{-0.4cm}\textbf{Variables}
	\begin{description}[leftmargin=1.8cm,style=multiline]
	\item[$c$] the $n_{b}\times1$ attack vector in $\mathcal{G}$;
	\item[$e$] the $n_z \times 1$ vector of measurement noise in $\mathcal{G}$;
	\item[$P$] the $n_{br}\times1$ vector of branch power flow in $\mathcal{G}$;
	\item[$P_l^{\text{ub}}$] the upper bound of the physical power flow on target line $l$ resulting from attack vector $\bar{c}^*$;
	\item[$P_{G}$] the $n_{g}\times1$ vector of generation dispatch variable in $\mathcal{G}$;
	\item[$\bar{P}_{I,i}, \bar{P}_{I,\mathcal{B}}$] the pseudo-boundary injection variable at boundary bus $i$, and the $n_\mathcal{B}\times 1$ pseudo-boundary injection vector of all boundary buses, respectively;
	\item[$u$] the $n_b\times 1$ slack vector to linearize the $l_1$-norm constraint;
	\item[$v$] the $n_b \times 1$ vector of power injections in $\mathcal{G}$;	
	\item[$x$] the $n_b \times 1$ vector of state variables in $\mathcal{G}$;
	\item[$z$] the $n_z \times 1$ vector of measurements in $\mathcal{G}$;
	\item[$\delta$] the vector of binary variables for dual variables in the bi-level attack optimization problem.
\end{description}

	\hspace{-0.4cm}\textbf{Multiple Linear Regression Parameters}
	\begin{description}[leftmargin=1.8cm,style=multiline]
	\item[$\mathbf{x}_t$] the input vector at time instance $t$, \textit{i.e.,} $\mathbf{x}_t^T= \bar{G}\bar{P}_{G,t} - \bar{P}_{D,t}$;
	\item[$\mathbf{X}$] the input matrix including input vectors for $t=1,2,\hdots,m$;
	\item[$\mathbf{y}_i$] the observed output vector at boundary bus $i$ for $t=1,2,\hdots,m$, \textit{i.e.,}, $y_{i,t}=\bar{P}_{I,i,t}$;
	\item[$\varepsilon_{i,t}$] the random error at boundary bus $i$ at time instance $t$;
	\item[$\hat{F}_i, \hat{f}_{i,0}$] $[\hat{F}_i, \hat{f}_{i,0}]$ is the estimated regression coefficient vector with input matrix $\mathbf{X}$ and output vector $\mathbf{y}_i$, \textit{i.e.,} $[\hat{F}_i, \hat{f}_{i,0}]^T=\left(\mathbf{X}^T\mathbf{X}\right)^{-1}\mathbf{X}^T\mathbf{y}_i$. 
	\end{description}
	
\hspace{-0.35cm}For any vector or matrix associated with the entire network $\mathcal{G}$ such as $c, G, H, K, P, P_G, P_D$, and $\Gamma$, we write the equivalent parameters corresponding to the sub-network $\mathcal{L}$ with $\bar{c}, \bar{G}, \bar{H}, \bar{K}, \bar{P}, \bar{P_G}, \bar{P}_D$, and $\bar{\Gamma}$, respectively.

\section{Introduction}

\IEEEPARstart{T}{he} electric power system is monitored via an extensive network of sensors in tandem with data processing algorithms, \textit{i.e.}, an intelligent cyber layer, that enables continual observation and control of the physical system. In recent years, several incidents \cite{Zeller2011,StuxnetIran,Siemens14plants,UkraineAttack} demonstrate that the cyber layer of power system is vulnerable to cyber-attacks that impact the system operation status and lead to serious physical consequences. Therefore, it is crucial to fully understand the potential consequences of such attacks.

There is an overall need to understand the effects of cyber-attacks against power systems. Doing so requires considering different categories of attacks, and which of them are credible and could lead to severe consequences. Fig. 1 illustrates our view of the space of cyber-attacks.  Within the space of all cyber-attacks, we choose to focus on false data injection (FDI) attacks in which the attacker replaces a subset of measurements with counterfeits.
The most effective FDI attacks are those which are unobservable to state estimation (SE). An unobservable FDI attack is one for which the measurements look like they originated from legitimate state values but are in fact spoofed by the attacker. Thus, such an attack cannot be detected as noisy measurements by bad data detector. It has been established in \cite{Liu2009,Hug2012,Liang2014} that FDI attacks can bypass both DC SE and AC SE when designed appropriately. Such attacks can be designed to specifically have an impact on electricity markets (\textit{e.g.}, \cite{Yuan11}, \cite{Yuan2012}) via a bi-level optimization problem wherein the first level problem models the attacker's goal and the second level problem models the system response. 

However, this did not address whether credible FDI attacks can be constructed that would actually lead to severe consequences. Our goal is to understand this question. The earlier work of \cite{Liang2015} and \cite{Zhang2016TSG} focused on the question of severe consequences: can an unobservable FDI attack cause not only a spoofed state estimator but also physical consequences? The answer was yes, in principle, by causing a control center to incorrectly dispatch, which would in turn cause a line overflow. Such attacks, similar to those in \cite{Yuan11} and \cite{Yuan2012}, can also be designed via a bi-level optimization problem.
This attack optimization problem requires the attacker to know system-wide information including topology, generation cost and capacity, and load data. In practice, obtaining all the required information can be difficult for the attacker. In order to ensure that the worst-case attacks in \cite{Yuan11,Yuan2012,Liang2015,Zhang2016TSG} are credible, we focus on understanding whether it is possible at all to design FDI attacks with only limited system information in this paper. 
Recently, \cite{Rahman2012,Liu2014TSG,Liu2016TSG} have demonstrated that it is possible to design FDI attacks against SE with inaccurate or limited topology information. However, physical consequences of the worst-case limited information FDI attacks have not been analyzed. 
	\begin{figure}[h]
		\centering{}\includegraphics[scale=0.7]{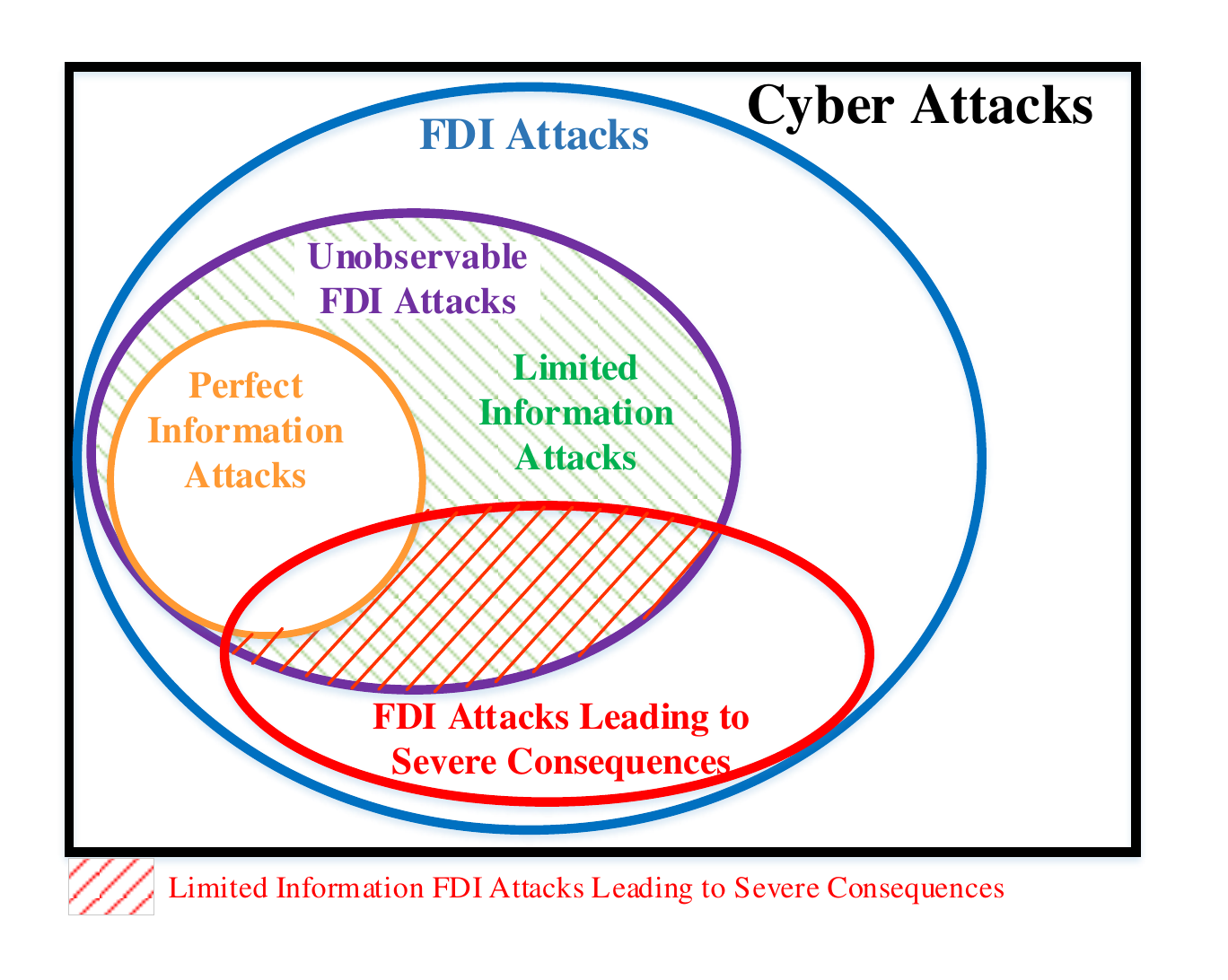}\protect\protect\caption{The space of cyber-attacks. \label{fig:AttacksPlot}}
	\end{figure}

In this paper, we assume the attacker only has access to information inside an attack sub-network and absolutely no knowledge of the outside network. In order to overcome the limited information, we suppose that the attacker infiltrates the sub-network long before it executes its attack, so that it can observe the natural behavior of the system in order to predict the effect of an attack. In particular, we assume that the attacker has access to historical data inside the sub-network that includes loads, costs, capacities, status, and dispatches of generators, and locational marginal prices. Historical data is sometimes directly utilized as pseudo-measurements to SE when the real-time information is incomplete. However, the attacker can be more sophisticated and use historical data to create higher fidelity boundary pseudo-measurements when they only have limited information. In this work, we suppose that the attacker uses multiple linear regression method to learn the relationship between the external network and the attack sub-network from historical data. Furthermore, we predict the response of the control center under such attacks in a local sub-network via a bi-level optimization problem. 

The limited information attack problem is similar to the "seamless" market problem  \cite{SeamsPESERC2010} which aims to achieve maximum social welfare across several adjacent markets, while allowing each market to model its own system, exchanging boundary information with its neighbors. In order to predict the behavior of the adjacent market, in \cite{Farrokhseresht16,SeamsPESERC2010}, linear regression model is used. However, in contrast to the market problem that requires perfect prediction of the external network, the attacker only needs partial prediction to overload a target line. In this paper, we demonstrate that even if the prediction of the external network re-dispatch is inaccurate, the attacker can still cause overflow on the target line. 

This paper builds upon our prior work \cite{ZhangPES2016}, where we consider limited information such that the attacker can learn system parameters perfectly inside a sub-network and imperfectly outside of the sub-network. In \cite{ZhangPES2016}, we demonstrate that such attacks can result in line overflows with both accurate and inaccurate information outside of the attack sub-network. 


The contributions of this paper are as follows:
\begin{enumerate}
	\item We introduce a method to compute power flows inside the attack sub-network with only localized information by approximating the external effects via \textit{pseudo-boundary injections}.
	We develop a multiple linear regression model, allowing the attacker to learn the relationship between pseudo-boundary injections and power injections in the sub-network from historical data.
	\item We introduce a bi-level optimization problem from the attacker's perspective to maximize the power flow on a target line wherein the first level models attacker's limited resources and the second level models system response via a modified DC OPF. Such DC OPF formulation only takes into account localized information and the pseudo-boundary injections. From the perspective of the system, this bi-level optimization problem can also be employed as the vulnerability analysis to evaluate the sub-graph that are prone to be attacked.
	\item We take the equivalent constraints that satisfy all historical data into account to prove the existence of a linear relationship between pseudo-boundary injections and power injections inside the attack sub-network for certain topologies. Note that the linear relationship here is different from the Ward \cite{Ward1949} or Monticelli \cite{Monticelli1979} network equivalent which is derived from a specific power flow case. 
	Furthermore, we show that even if the attacker cannot exactly predict the physical consequences, it can compute an upper bound on the power flow of the target line.	
	\item We demonstrate that an attacker can cause line overflows in the IEEE 24-bus RTS system and IEEE 118-bus system using this bi-level attack optimization problem. 
\end{enumerate}  


The paper is organized as follows. The system and attack models are introduced in Sec.~II. Prior work on perfect information FDI attack is reviewed in Sec.~III. The limited information attack model is presented in Sec.~IV. Justifications for the proposed attack strategy is given in Sec.~V, followed by numerical results in Sec.~VI and conclusions in Sec.~VII.



\section{System and Attack Models}
In this section, we introduce the mathematical formulation
for SE, unobservable FDI attacks, and
OPF. Throughout, we assume there are $n_b$ buses,
$n_{br}$ lines, $n_g$ generators, and $n_z$ measurements in the
system. We assume that the system uses DC SE, and DC OPF. The topology of the entire power system is denoted by $\mathcal{G}$.

\subsection{Measurement Model and State Estimation}
	The DC measurement model can be written as
	\begin{equation}
	z=Hx+e\label{eq:DCMeasurement}
	\end{equation}
	where $z$ is the $n_{z}\times1$ measurement vector; $x$ is the $n_b\times1$ voltage angle state vector; $H$ denotes the $n_{z}\times n_{b}$ dependency matrix between measurements and states; $e$ is the $n_{z}\times1$ measurement error vector assumed to be composed of independent Gaussian random
	variables. 
	
	We use weighted least-squares (WLS) to solve this problem \cite{AburBook}. Subsequent  to  SE, a bad data detector uses $\chi^2$-test to detect and eliminate noisy measurements. 
	
\subsection{Unobservable FDI Attack Model}
	In an unobservable FDI attack, the attacker aims to maliciously change the system states from $x$ to $x+c$ without being being detected by bad data detector. In the absence of noise, the measurements after such attacks, $z^a$, satisfy
		\begin{equation}
		z^a=z+Hc=H\left(x+c\right)
		\end{equation}
		where $c$ is the $n_b\times 1$ attack vector. 
\subsection{\label{sub:E_opf}Optimal Power Flow}

The DC OPF problem aims to minimize the total costs of all generators subject to power balance, thermal limits, and generation limits constraints. The formulation of OPF is described in detail in Sec. \ref{sub:GlobalAttack}. 


\section{\label{sec:AttackModel}Prior Work: Perfect Knowledge FDI Attacks}
In this section, we briefly review a closely related work \cite{Liang2015} on unobservable FDI attacks assuming an attacker with perfect knowledge.  As in \cite{Liang2015}, we distinguish between two types of buses in the network: \textit{load buses} that have load directly connected to that bus, and \textit{non-load buses} with no load.
The knowledge (denoted K1) and capabilities (denoted C1) of the attacker in \cite{Liang2015} is described below:

\begin{description}\label{des:AttackCapacity}
	\item[K1.] The attacker has knowledge of (i) the complete network topology; (ii) the cost, capacity, and operational status of all generators in the system; and (iii) historical load data of the entire network.
	\item[C1.] The attacker may choose a small area $\mathcal{S}$, which is a \textit{sub-graph} of the entire network $\mathcal{G}$, \emph{i.e.}, a sub-network chosen in certain manner (see below for description) and bounded by load buses. The attacker may replace measurements inside $\mathcal{S}$.
\end{description}


\subsection{\label{sub:GlobalAttack}Attack Design with Perfect Information}
In \cite{Liang2015}, a bi-level optimization problem is introduced to find the FDI attack that maximizes the power flow on a chosen target line. In \cite{Liang2015}, the authors use $B$-$\theta$ method to formulate the second level DC OPF, in which the power flow vector is computed as a linear function of voltage angle, $\theta$. In contrast, in this paper we equivalently formulate the DC OPF using PTDF, where the line power flow is calculated as the product of PTDF matrix and power injection. The bi-level attack optimization problem is as follows: 
\begin{flalign}
\underset{c,P}{\text{maximize}}\;\; \hspace{0.17cm} & P_{l}-\zeta\left\Vert c\right\Vert _{0}\label{eq:Obj1_MaxPF}\\
\notag \text{subject to}\hspace{0.2cm}\;\\
& \hspace{-0.9cm}P=K(GP_{G}^{*}-P_{D}) \label{eq:Physical_PF}\\
& \hspace{-0.9cm}\left\Vert c\right\Vert _{0}\leq N_{0}\label{eq:con_resources}\\
& \hspace{-0.9cm}-\tau P_{D}\leq Hc\leq \tau P_{D}\label{eq:con_loadshift}\\
& \hspace{-0.9cm}\left\{P_{G}^{*}\right\} =\text{arg}\left\{\underset{P_G}{\text{minimize}}\;\; \underset{g=1}{\overset{n_{g}}{\sum}} C_{g}\left(P_{Gg}\right)\right\} \label{eq:OBJ_MINCOST}\\
&\notag \hspace{-0.8cm} \text{subject to}\\
&\hspace{-0.2cm}
\sum_{g=1}^{n_{g}}P_{Gg}=\sum_{i=1}^{n_{b}}P_{Di} \label{eq:con_nodebalance}\\
&\hspace{-0.5cm}-P_\text{max}\leq K(GP_{G}-P_{D}+Hc)\leq P_\text{max}  \label{eq:con_powerflow}\\
&\hspace{-0.2cm} P_{G,\text{min}}\leq P_{G}\leq P_{G,\text{max}} \label{eq:con_GENlimit}
\end{flalign}
where $P$ is a $n_{br}\times 1$ vector of power flow with thermal limit as  $P_{\text{max}}$, $P_G$ is the $n_g\times 1$ active power generation vector with maximum and minimum limits as  $P_{G,\text{max}}$ and $P_{G,\text{min}}$, respectively; $G$ is the $n_b \times n_g$ generator-to-bus connectivity matrix; $C_g(\cdot)$ is the quadratic cost function for generator $g$; $K$ is the $n_{br}\times n_{b}$ power transfer distribution factor (PTDF) matrix;  $H$ is the $n_{b}\times n_{b}$ dependency matrix between power injection measurements and state variables; $P_{D}$ is the $n_{b}\times 1$ real power load vector;  $\tau$ is the load shift factor which represents the percentage that the cyber load (computed with the spoofed measurements) differs from the physical load; $N_{0}$ is the $l_{0}$-norm constraint limit; and $\zeta$ is the weight of the norm of attack vector $c$.

The objective of the optimal attack problem is to maximize the power flow on the target line $l$ while changing as few states as possible. In the first level, the attack vector is chosen subject to the $l_{0}$-norm constraint of the attack vector in \eqref{eq:con_resources}, \textit{i.e.,} the number of non-zero elements in the attack vector $c$, and the load shift limitation in \eqref{eq:con_loadshift}.  In the second level, the system response to the attack determined in the first level is modeled via DC OPF in \eqref{eq:OBJ_MINCOST}$-$\eqref{eq:con_GENlimit}, where the objective in \eqref{eq:OBJ_MINCOST} is to minimize the the total costs of all generators and constraints \eqref{eq:con_nodebalance}$-$\eqref{eq:con_GENlimit} represent power balance, thermal limits, and generation limits. Note that the power injection vector in \eqref{eq:con_powerflow} is changed from $(GP_G-P_D)$ to $(GP_G-P_D+Hc)$ by the attacker. 

The bi-level optimization problem introduced above is non-linear and
non-convex. For tractability, several constraints are modified to convert the original
formulation into an equivalent mixed-integer linear program (MILP). 
The modifications include: (a) relaxing the $l_{0}$-norm constraint
in \eqref{eq:con_resources} to an $l_{1}$-norm constraint with limit $N_{1}$ and linearizing it by introducing a slack vector $u$; (b) replacing
the second level DC OPF problem by its Karush-Kuhn-Tucker
(KKT) optimality conditions; and (c) linearizing the complementary slackness conditions in KKT by introducing a new vector $\delta$ of binary variables for dual variables and a large constant $M$.


\subsection{Attack Implementation}
Once the attack vector $c$ is determined, the attacker can identify the buses with state changes to enable the attack, \textit{i.e.,} the buses corresponding to non-zero entries of $c$. We refer to these buses as \textit{center buses}. The attack sub-graph $\mathcal{S}$ includes all center buses as well as the lines and buses connecting to every center bus. The non-center buses connected to a center bus are all load buses. 
This method ensures that nothing is changed outside the attack sub-graph $\mathcal{S}$ while the changes needed at the non-center buses are presented as load changes.

Given attacker's knowledge K1 and capabilities C1, the authors in \cite{Liang2015} introduce the FDI attacks as follows:
\begin{align} 
z_i^a & =\begin{cases}
\begin{array}{l}
z_{i}\:,\\
z_{i} + H_i c\:,
\end{array} & \begin{array}{l}
i\notin\mathcal{S}\\
i\in\mathcal{S}
\end{array}\end{cases}.\label{eq: ModMeasureStra1} 
\end{align}

Note that, this attack may not be unobservable to AC SE \cite{Jia2012}, but can be converted to an unobservable AC attacks as 
\begin{align} 
z_i^a =\begin{cases}
\begin{array}{l}
z_{i}\:,\\
h_i(\hat{x}+ c)\:,
\end{array} & \begin{array}{l}
i\notin\mathcal{S}\\
i\in\mathcal{S}
\end{array}\end{cases}\label{eq:DCtoAC}
\end{align}	
 where $h_i(\cdot)$ is the non-linear relationship between measurement $z_i$ and state vector $x$, $\hat{x}$ is the state vector that the attacker estimated with measurements in $\mathcal{S}$.
 The method is first introduced in \cite{Hug2012} and \cite{Liang2015}. Furthermore, in our prior work \cite{Liang2015, Zhang2016TSG, ZhangPES2016}, we have demonstrated that the consequences of the AC attacks track those of the original DC attacks.



\section{\label{sec:LocalAttack}Optimal Line Overflow Attacks with Localized Information}
In this section, we build upon the attack in \cite{Liang2015} by replacing assumptions K1 and C1 with the following limited assumptions on the attacker's knowledge (K2) and capability (C2):
\begin{description}\label{des:AttackCapacity1}
	\item[K2] Within a sub-network $\mathcal{L}$, the attacker has perfect knowledge of the topology, historical load data, generator data including operational status, capacity, cost, and historical dispatch information, and locational marginal price (LMP). In particular, we assume that the attacker has enough historical data to perform the multiple linear regression described in the sequel. This sub-network $\mathcal{L}$ is bounded by load buses.
	\item[C2] The attacker may modify measurements within an attack sub-graph $\mathcal{S}$ within $\mathcal{L}$, \textit{i.e.,} $\mathcal{S}\subseteq\mathcal{L}$. 
\end{description}
An example attack sub-network in the IEEE 24-bus RTS system is shown in Fig. \ref{fig:LocalAttack}.

\textit{Notation}: Recall that the area outside $\mathcal{L}$, where the attacker has no knowledge, is denoted as the external network, \textit{i.e.}, $\mathcal{E}=\mathcal{G} \setminus \mathcal{L}$.  We define the set of boundary buses in $\mathcal{L}$ as $\mathcal{B}$, such that each bus in $\mathcal{B}$ is connected to at least one bus in $\mathcal{E}$. The set of remaining buses in $\mathcal{L}$ is defined as the internal bus set $\mathcal{I}=\mathcal{L} \setminus \mathcal{B}$. For any vector or matrix associated with the entire network $\mathcal{G}$ such as $c,G,H,K,P,P_G$, and $P_D$, we write the equivalent parameters corresponding to the sub-network $\mathcal{L}$ with $\bar{(\cdot)}$. For example, $\bar{H}$ refers to the dependency matrix between power injection measurements and state variables only inside $\mathcal{L}$. Sub-vectors are denoted by subscripts with the corresponding set of elements (buses, lines, or generators). Sub-matrices are denoted with a subscript giving the set of rows, and a superscript giving the set of columns.


\subsection{\label{sub:PF_LOCAL}System Power Flow with Localized Information}
According to assumption K2, the attacker only has knowledge inside $\mathcal{L}$. Therefore, the attacker cannot calculate the line power flow inside $\mathcal{L}$ with \eqref{eq:Physical_PF} since both the PTDF matrix $K$ of the network $\mathcal{G}$ and the subset of power injections in external network $\mathcal{E}$ are unavailable to attacker. To form the line power flow with K2, we introduce a vector of \textit{pseudo-boundary injection} $\bar{P}_{I,\mathcal{B}}$. The $i$th entry of $\bar{P}_{I,\mathcal{B}}$, namely $\bar{P}_{I,i}$, corresponding to boundary bus $i$, represents the sum of power flows delivered from $\mathcal{L}$ to $\mathcal{E}$ at boundary bus $i$, $i\in \mathcal{B}$, as 
\begin{equation}
\bar{P}_{I,i} = \sum_{k\in\mathbb{W}^{\mathcal{E}}_i}P_k \label{eq:PseudoInjection_PKE}
\end{equation}
where $\mathbb{W}^{\mathcal{E}}_i$ represents the lines located in $\mathcal{E}$ that are connected to boundary bus $i$. 

Using \eqref{eq:PseudoInjection_PKE}, the vector of line power flows in $\mathcal{L}$ can be written as
\begin{equation}
\bar{P} = \bar{K}^{\mathcal{I}}(\bar{G}_{\mathcal{I}}\bar{P}_G-\bar{P}_{D,\mathcal{I}})+\bar{K}^{\mathcal{B}}(\bar{G}_{\mathcal{B}}\bar{P}_G-\bar{P}_{D,\mathcal{B}}-\bar{P}_{I,\mathcal{B}}) \label{eq:LocalPF}
\end{equation}
where $\bar{K}$ is split into column-wise sub-matrices $\bar{K}^{\mathcal{I}}$ and $\bar{K}^{\mathcal{B}}$, and $\bar{G}$ is split into row-wise sub-matrices $\bar{G}_{\mathcal{I}}$ and $\bar{G}_{\mathcal{B}}$, both corresponding to buses in $\mathcal{I}$ and $\mathcal{B}$, respectively.
This equation can be further simplified as
\begin{equation}
\bar{P} = \bar{K}(\bar{G}\bar{P}_G-\bar{P}_D)-\bar{K}^{\mathcal{B}}\bar{P}_{I,\mathcal{B}}. \label{eq:LocalPF_b}
\end{equation}



\subsection{\label{sub:MLR}Multiple Linear Regression}
The optimal line overflow attack introduced in Sec. \ref{sub:GlobalAttack} involves determining the attack vector in the first level and estimating the system response to the attack via the whole system DC OPF in the second level. However, due to limited knowledge, the attacker must predict the response of the OPF using only local knowledge. The OPF may be reformulated to include power balance, thermal limit, and generation limit constraints only in $\mathcal{L}$, and apply \eqref{eq:LocalPF_b} to capture all effects in the external network through the pseudo-boundary injections $\bar{P}_{I,\mathcal{B}}$. However, with this formulation, the attacker still cannot predict how the attack affects $\bar{P}_{I,\mathcal{B}}$ since it depends on both power injections in $\mathcal{L}$ and $\mathcal{E}$. Therefore, before the attack is executed, the attacker cannot estimate the system re-dispatch after the attack accurately.       

If the attacker can obtain a large amount of historical power injections and pseudo-boundary injections data in $\mathcal{L}$ (for example, by observing the system over a long time), it can learn a functional relationship between pseudo-boundary injection, $\bar{P}_{I,\mathcal{B}}$, and power injections inside $\mathcal{L}$. The attacker can then predict the pseudo-boundary injections with the power injection in $\mathcal{L}$ as 
\begin{equation}
\hat{\bar{P}}_{I,\mathcal{B}}=\hat{F}\left(\bar{G}\bar{P}_G-\bar{P}_D\right)+\hat{f}_0 \label{eq:MLP_coeff}
\end{equation}   
where $[\hat{f}_0 \; \hat{F}]$ represent an affine relationship, and $\hat{\bar{P}}_{I,\mathcal{B}}$ is the attacker's prediction of pseudo-boundary injection by capturing the functional relationship via a linear model. Note that the historical pseudo-boundary injections can be computed with data in $\mathcal{L}$ as
\begin{equation}
\bar{P}_{I,i}=\sum\limits_{g\in \mathbb{G}_i}\bar{P}_{G,g}-\bar{P}_{D,i}-\sum\limits_{k\in \mathbb{W}^{\mathcal{L}}_i} \bar{P}_k \label{eq:PseudoInjection}
\end{equation}  
where $\mathbb{G}_i$ is the set of generators connected to bus $i$, and $\mathbb{W}^{\mathcal{L}}_i$ is the set of lines in $\mathcal{L}$ that connected to bus $i$.
 We suppose the attacker uses multiple linear regression to learn $[\hat{f}_{0}\; \hat{F}]$. 

Multiple linear regression is a statistical method to find a linear relationship between multiple inputs and single output \cite{RegressionBook}. Take boundary bus $i$ for an example. Let the output $y_i=\bar{P}_{I,i}$ and inputs $\mathbf{x}^T = \bar{G}\bar{P}_G-\bar{P}_D$. At one instance of time $t$, $y_{i,t}$ satisfies
\vspace{-0.2cm}
\begin{equation}
y_{i,t} = \left[\begin{array}{cccc} x_{1,t} & x_{2,t} & \hdots & x_{k,t}\end{array} \right] \left[\begin{array}{c} f_{i,1} \\ f_{i,2}\\ \vdots\\ f_{i,k} \end{array}\right]+ f_{i,0} + \varepsilon_{i,t} \label{eq:MLP_variable}
\end{equation}
where $f_{i,j}$, $j=0,...,k$, are regression coefficients for boundary bus $i$, and $\varepsilon_{i,t}$ is random error. In the following, we let $\hat{F}_i=\left[f_{i,1}\;\;f_{i,2}\;...\; f_{i,k}\right]$ be the coefficient vector for boundary bus $i$.

Consider a problem with an $m\times 1$ observed output vector $\mathbf{y}_i$, and an $m\times k+1$ input matrix $\mathbf{X}=[\mathbf{1} \;\mathbf{x}_1 \;...\; \mathbf{x}_k]$. The relationship in \eqref{eq:MLP_variable} can be written in matrix notation as
\begin{equation}
\mathbf{y}_i=\mathbf{X}\left[\hat{f}_{i,0} \;\; \hat{F}_i\right]^T + \mathbf{\varepsilon}_i. \label{eq:MLP_matrix}
\end{equation}
Least squares estimation (LSE) can be used to estimate the regression coefficients $\hat{F}_i$ in \eqref{eq:MLP_matrix} as 
\begin{equation}
[\hat{f}_{i,0} \;\; \hat{F}_i]^T=\left(\mathbf{X}^T\mathbf{X}\right)^{-1}\mathbf{X}^T\mathbf{y}_i. \label{eq:LSE_coefficient}
\end{equation}
Note that as we stated in K2, we assume the attacker has enough historical data. Therefore, $\left(\mathbf{X}^T\mathbf{X}\right)$ is full-rank. We repeatedly use this process to obtain $[\hat{f}_{i,0} \; \hat{F}_i]$ for each $i$, $i\in\mathcal{B}$. Thus, the attacker can use historical data to obtain the estimate $\hat{F}$, such that $[\hat{f}_0\;\hat{F}]=[\hat{f}_{i,0} \; \hat{F}_i]$, $\forall i\in \mathcal{B}$. The dimension of $[\hat{f}_0 \; \hat{F}]$ is $n_{\mathcal{B}}\times (k+1)$, where $n_{\mathcal{B}}$ is the number of boundary buses.
The attacker now can approximate \eqref{eq:LocalPF_b} as 
\begin{equation}
\bar{P}=\bar{K}\left(\bar{G}\bar{P}_G-\bar{P}_D\right)- \bar{K}^{\mathcal{B}}\left(\hat{F}\left(\bar{G} \bar{P}_G - \bar{P}_D\right)+\hat{f}_0\right). \label{eq:PI_predict}
\end{equation} 
In the following subsection, \eqref{eq:PI_predict} is used to evaluate the vulnerability to attacker with limited information. 
\begin{figure}[h]
	\centering{}\includegraphics[trim=0 1.2cm 0 1.2cm, scale=0.4]{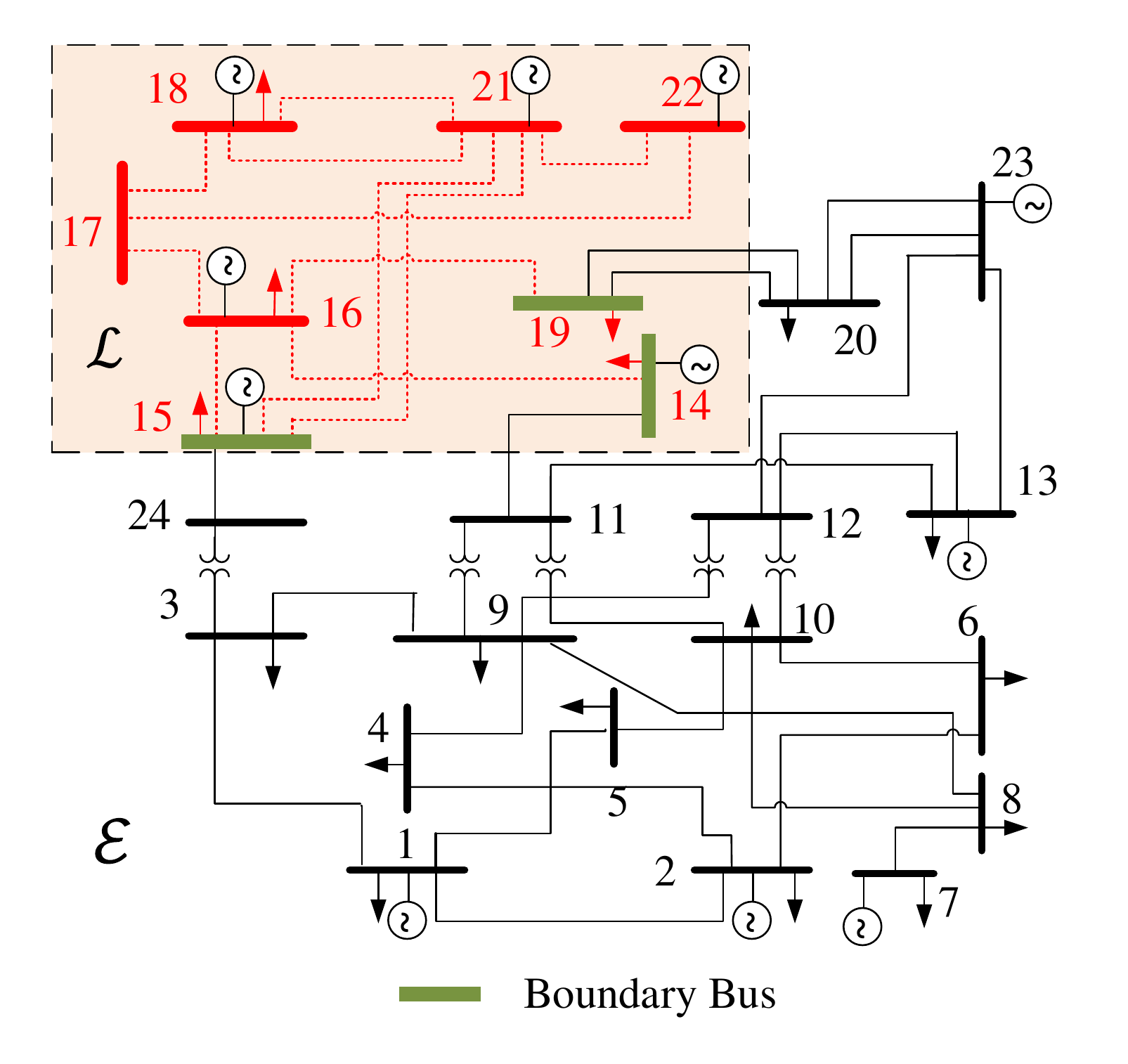}\protect\protect\caption{IEEE 24-bus RTS system decomposed into attack sub-network and attack external network. \label{fig:LocalAttack}}
\end{figure}

\subsection{\label{sub:LA_define}Attack Optimization Problem under Localized Information}
In this section, we introduce a bi-level attack optimization problem to formulate the limited information attack. The first level determines the attack vector in $\mathcal{L}$ that maximize target line flow and the second level represents system re-dispatch after attack via DC OPF formulated with only information in $\mathcal{L}$. However, since the attacker does not have knowledge of either the topology or the generator information in $\mathcal{E}$, we assume that the attacker only minimizes the total cost of generation in $\mathcal{L}$ and approximates the effect of the total generation cost in $\mathcal{E}$ as the total cost of the pseudo-boundary injections in the second level modified OPF. For boundary bus $i$, this cost is estimated as the product of the LMP, $\lambda_i$, and the pseudo-boundary injection at bus $i$. The limited information bi-level attack optimization problem is as follows:
\begin{flalign}
\underset{\bar{c},\bar{P}}{\text{maximize}} \;\;& \bar{P}_{l}-\zeta\left\Vert \bar{c}\right\Vert _{0}\label{eq:Obj1_MaxPF_1}\\
\notag \text{subject to}\;\;\\
&\hspace{-1cm} \bar{P}=\bar{K}\left(\bar{G}\bar{P}_G^{*}-\bar{P}_D\right)- \bar{K}^{\mathcal{B}}\bar{P}_{I,\mathcal{B}}^* \label{eq:Physical_PF_1}\\
&\hspace{-0.5cm} \left\Vert \bar{c}\right\Vert _{0}\leq N_{0}, \hspace{0.5cm}  \bar{c}_{\mathcal{B}}=\mathbf{0}\label{eq:con_resources_1}\\
&\hspace{-0.5cm} -\tau  \bar{P}_{D}\leq \bar{H}\bar{c}\leq\tau \bar{P}_{D} \label{eq:con_loadshift_1}\\
&\hspace{-1.8cm} \left\{ \bar{P}_{G}^{*},\bar{P}_{I,\mathcal{B}}^*\right\}=\text{arg}\left\{ \hspace{-0.1cm} \underset{\bar{P}_{G},\bar{P}_{I,\mathcal{B}}}{\text{minimize}}\negthinspace\underset{g\in \mathcal{L}}{{\sum}} C_{g}\left(\bar{P}_{Gg}\right)+\sum\limits_{i\in \mathcal{B}} \lambda_i \bar{P}_{I,i}\right\} \label{eq:OBJ_MINCOST_1} \\
\notag \text{subject to}\hspace{0.2cm}\;\\
& \hspace{0.2cm} \bar{P}_{I,\mathcal{B}} = \hat{F}\left(\bar{G}\bar{P}_G-\bar{P}_D+\bar{H}\bar{c}\right)+\hat{f}_0 \label{eq:PseudoInjection_opt}\\
&  \hspace{0.2cm}\sum\limits_{g\in {\mathcal{L}}} \bar{P}_{G,g} - \sum\limits_{i\in\mathcal{B}}\bar{P}_{I,i}= \sum\limits_{i\in \mathcal{L}} \bar{P}_{D,i}\\
&\hspace{-1.4cm} -\bar{P}_{\text{max}}\leq \bar{K}\left(\bar{G} \bar{P}_G - \bar{P}_D+\bar{H}\bar{c}\right) - \bar{K}^{\mathcal{B}}\bar{P}_{I,\mathcal{B}} \leq \bar{P}_{\text{max}}\label{eq:ThermalLimit_Local}\\
&\hspace{0.5cm}\bar{P}_{G,\text{min}}\leq \bar{P}_G \leq \bar{P}_{G,\text{max}} \label{eq:PG_limit}
\end{flalign}
where \eqref{eq:OBJ_MINCOST_1} captures the modified OPF objective as the first term represents the total cost of generation in $\mathcal{L}$ and the second term is the total cost of pseudo-boundary injections. Constraint \eqref{eq:PseudoInjection_opt} represents the attacker's prediction of the pseudo-boundary injection after attack resulting from the counterfeit loads. Note that, in the cyber system (OPF with attack vector), the power injections in $\mathcal{L}$ is $\bar{G}\bar{P}_G-\bar{P}_D+\bar{H}\bar{c}$, thus, the corresponding pseudo-boundary injection should respond to these injections with attack. In \eqref{eq:ThermalLimit_Local}, we directly write the second term with $K^{\mathcal{B}}\bar{P}_{I,\mathcal{B}}$ instead of $K^{\mathcal{B}}\left(\hat{F}\left(\bar{G}\bar{P}_G-\bar{P}_D+\bar{H}c\right)+\hat{f}_0\right)$. In addition, we have changed the constraint on the attack vector in \eqref{eq:con_resources_1} to limit the attack to be within the sub-network $\mathcal{L}$. 

As with the bi-level optimization problem for perfect information,  \eqref{eq:Obj1_MaxPF_1}$-$\eqref{eq:PG_limit} is non-linear and non-convex. We employ the same modifications as detailed in Sec. \ref{sub:GlobalAttack} to convert it into a MILP.


Note that attacker can only overload lines in $\mathcal{L}$.
	The attack optimization problem ensures that only measurements inside $\mathcal{L}$ can be changed by attacker. The post-attack system re-dispatch (OPF), on the other side, forces all the cyber line power flows within the thermal limits. Therefore, the attacker can only hide the physical overflow inside $\mathcal{L}$ with FDI attack.



\section{\label{sec:validation} Justification of the Localized Information FDI Attacks}
In this section, we make a distinction between the \emph{physical} system, as it actually exists, and the \emph{cyber} system, as seen by the control center, which may differ from the physical system due to the FDI attack. We use the superscripts $p$ and $c$ to denote the physical and cyber power flows, respectively. Due to limited information, the attacker can only use data in $\mathcal{L}$ to compute the physical and cyber power flows which may be different from the actual values. Therefore, we refer the physical and cyber power flows computed by the attacker as \textit{attacker-computed} physical and cyber power flows, respectively.

We prove that: (i) there exists a linear relationship $F$ between pseudo-boundary injection and power injections in $\mathcal{L}$ under certain circumstances; and (ii) even if $\hat{F}$ does not accurately predict the system response after attack, the attacker can still compute an upper bound on the physical power flow with limited information.


The following assumptions are made about the historical data available to the attacker: (i) the topology for all the historical data remains the same, (ii) each instance of historical data satisfies OPF, and (iii) there exists a subset of buses $\mathcal{Z}$ in $\mathcal{E}$, for which power injections remain constant in the historical data. The subset of remaining buses in $\mathcal{E}$ is denoted as $\mathcal{Y}=\mathcal{E}\setminus\mathcal{Z}$. In our prior work \cite{Liang2015,Zhang2016TSG,ZhangPES2016,Chu2016SmartGridComm}, we have shown that congested lines are more vulnerable to line overflow FDI attacks. Analogously, in this work, we assume the target line is congested.

\subsection{Validation of Multiple Linear Regression Method} \label{sub:Validation_MLR}
In this subsection, we prove the existence of linear relationship between pseudo-boundary injections and power injections in $\mathcal{L}$ under certain circumstances.


For simplicity, we define the set of lines in the network $\mathcal{G}$ that are the congested for each instance of historical data as $\mathbb{C}$, where $\mathbb{C}^+$ and $\mathbb{C}^-$ are the subsets in $\mathbb{C}$ for which the power flow directions are positive and negative, respectively. We assume there are $n_c$ congested lines in $\mathbb{C}$, $n_{\mathcal{L}}$, $n_{\mathcal{E}}$, and $n_{\mathcal{Y}}$ buses in $\mathcal{L}$, $\mathcal{E}$, and $\mathcal{Y}$, respectively. 

In order to evaluate the performance of the coefficient matrix $\hat{F}$, we define a matrix $B=[K_{\mathbb{C}}^{\mathcal{Y}}; \; \mathbf{1}^T]$, where $K_{\mathbb{C}}^{\mathcal{Y}}$ is the sub-matrix of $K$ whose rows correspond to the congested lines in $\mathbb{C}$ and columns correspond to the buses in $\mathcal{Y}$.
\begin{theorem}
	The coefficient matrix $\hat{F}$ perfectly predicts the pseudo-boundary injections with power injections in $\mathcal{L}$ linearly if and only if $B$ is full column rank.
\end{theorem}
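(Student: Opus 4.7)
The plan is to derive, from OPF optimality, an affine equation that each historical instance must satisfy, and then read off when this equation forces $v_{\mathcal{Y}}$ (the varying external injections) to be a linear function of $v_{\mathcal{L}}$. First I would write the KKT-feasibility constraints that bind for every instance of the historical data: the power-balance equation $\mathbf{1}^T v = 0$ and the congestion equalities $K_{\mathbb{C}} v = S P_{\mathbb{C},\text{max}}$ on lines in $\mathbb{C}$. Partitioning $v = [v_{\mathcal{L}};\, v_{\mathcal{Y}};\, v_{\mathcal{Z}}]$ and moving the constant $v_{\mathcal{Z}}$ to the right-hand side yields exactly the single linear equation
\begin{equation}
A v_{\mathcal{L}} + B v_{\mathcal{Y}} = d, \label{eq:ABF}
\end{equation}
with $A$, $B$, and $d$ as defined in the nomenclature. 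Simultaneously, the pseudo-boundary injection vector can be expressed globally through the PTDF as $\bar{P}_{I,\mathcal{B}} = J v = J^{\mathcal{L}} v_{\mathcal{L}} + J^{\mathcal{Y}} v_{\mathcal{Y}} + J^{\mathcal{Z}} v_{\mathcal{Z}}$, where the last term is a fixed constant over all historical instances. Consequently, a perfect affine law $\hat{F}$ from $v_{\mathcal{L}}$ to $\bar{P}_{I,\mathcal{B}}$ exists if and only if $J^{\mathcal{Y}} v_{\mathcal{Y}}$ is uniquely determined by $v_{\mathcal{L}}$ across the historical dataset, which in turn will be governed by the solvability of \eqref{eq:ABF}.

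For the sufficiency direction, I would assume $B$ has full column rank, so $B^T B$ is invertible and \eqref{eq:ABF} admits the unique solution $v_{\mathcal{Y}} = (B^T B)^{-1} B^T (d - A v_{\mathcal{L}})$. Substituting this into $\bar{P}_{I,\mathcal{B}} = J^{\mathcal{L}} v_{\mathcal{L}} + J^{\mathcal{Y}} v_{\mathcal{Y}} + J^{\mathcal{Z}} v_{\mathcal{Z}}$ produces an exact affine map, so the true regression coefficients $[\hat{f}_0\;\hat{F}]$ exist; since the attacker has enough historical data to make $\mathbf{X}^T\mathbf{X}$ invertible (assumption K2), the LSE in \eqref{eq:LSE_coefficient} recovers those coefficients and therefore predicts $\bar{P}_{I,\mathcal{B}}$ with zero residual.

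For the necessity direction, I would assume $B$ is rank-deficient and pick a nonzero vector $w$ in the right null space of $B$. Then for any feasible historical pair $(v_{\mathcal{L}}, v_{\mathcal{Y}})$ satisfying \eqref{eq:ABF}, the perturbed pair $(v_{\mathcal{L}}, v_{\mathcal{Y}} + \alpha w)$ also satisfies \eqref{eq:ABF} for every scalar $\alpha$, so a sufficiently rich historical record must contain two instances that share the same $v_{\mathcal{L}}$ but have different $v_{\mathcal{Y}}$. These two instances generically produce different values of $J^{\mathcal{Y}} v_{\mathcal{Y}}$, hence different pseudo-boundary injections, which no single-valued function of $v_{\mathcal{L}}$ (linear or otherwise) can reproduce exactly; this contradicts the existence of a perfect $\hat{F}$.

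The main obstacle is the necessity direction: strictly speaking, rank deficiency of $B$ only precludes a perfect linear fit when the null-space direction $w$ also satisfies $J^{\mathcal{Y}} w \neq 0$ and when the historical data actually explores that direction. I would handle this by invoking the richness assumption on the historical data implicit in K2 (so that all feasible $(v_{\mathcal{L}},v_{\mathcal{Y}})$ pairs consistent with \eqref{eq:ABF} are realized) and by noting that in the OPF setting $J^{\mathcal{Y}}$ has generically independent rows from $A$ (since boundary PTDF rows are not typically contained in the row space of the congestion/power-balance system), so the implication is clean in all nondegenerate cases of interest.
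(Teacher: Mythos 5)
Your proposal follows essentially the same route as the paper's proof: collect the binding congestion and power-balance equalities into $Av_{\mathcal{L}}+Bv_{\mathcal{Y}}=d$, solve for $v_{\mathcal{Y}}$ via the pseudoinverse $B^+=(B^TB)^{-1}B^T$ when $B$ has full column rank and substitute into $\bar{P}_{I,\mathcal{B}}=J^{\mathcal{L}}v_{\mathcal{L}}+J^{\mathcal{Y}}v_{\mathcal{Y}}+J^{\mathcal{Z}}v_{\mathcal{Z}}$, and argue non-uniqueness of $v_{\mathcal{Y}}$ given $v_{\mathcal{L}}$ otherwise. The caveat you flag in the necessity direction --- that a null-space direction $w$ of $B$ must satisfy $J^{\mathcal{Y}}w\neq 0$ and actually be explored by the historical data for the prediction to fail --- is a gap the paper's own proof also leaves unaddressed, so your treatment is, if anything, slightly more explicit about what is being assumed.
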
 	

\begin{proof}
We denote the vector of power injections in $\mathcal{G}$ as $v$; that is $v=GP_G-P_D$; the vectors $v_{\mathcal{L}}$, $v_{\mathcal{E}}$, $v_{\mathcal{Y}}$, and $v_{\mathcal{Z}}$ represent the subsets of $v$ corresponding to buses in $\mathcal{L}$, $\mathcal{E}$, $\mathcal{Y}$, and $\mathcal{Z}$, respectively.  We define $\mathbb{W}_i$ as the set of lines connecting to boundary bus $i$, $i\in\mathcal{B}$, where $\mathbb{W}_i^{\mathcal{L}}$ and $\mathbb{W}_i^{\mathcal{E}}$ are the subsets of lines located in $\mathcal{L}$ and $\mathcal{E}$, respectively. We define a vector $J_i$ as the sum of row vectors in $K$ corresponding to lines in $\mathbb{W}_i^{\mathcal{E}}$; that is $J_i=\sum_{k\in\mathbb{W}^{\mathcal{E}}_i}K_k$. The matrices $J_i^{\mathcal{L}}$, $J_i^{\mathcal{Y}}$, and $J_i^{\mathcal{Z}}$ are the sub-matrices of $J$ in which the columns of the matrices corresponding to buses in $\mathcal{L}$, $\mathcal{Y}$, and $\mathcal{Z}$, respectively. As introduced in Sec. \ref{sec:LocalAttack}, the pseudo-boundary injection at bus $i$ is a linear combination of power injections at each bus in $\mathcal{L}$, $\mathcal{Y}$, and $\mathcal{Z}$ is given by
\begin{equation}
\bar{P}_{I,i} = J_i^{\mathcal{L}} v_{\mathcal{L}}+J_i^{\mathcal{Y}} v_{\mathcal{Y}} + J_i^{\mathcal{Z}} v_{\mathcal{Z}}. \label{eq:PINJ_wholeNetwork}
\end{equation}
Note that $v_{\mathcal{Z}}$ is a constant across all instances of historical data.
Since each instance of historical data resulted from an converged OPF, $v_\mathcal{L}$ and $v_\mathcal{Y}$ satisfy the following: 
\begin{flalign}
K_k^{\mathcal{L}}v_{\mathcal{L}} + K_k^{\mathcal{Y}} v_{\mathcal{Y}} = P_{k,\text{max}}-K_k^{\mathcal{Z}} v_{\mathcal{Z}} \;\;\;\;\;\; \forall k\in \mathbb{C}^+ \label{eq:con_br+}\\
K_r^{\mathcal{L}}v_{\mathcal{L}} + K_r^{\mathcal{Y}} v_{\mathcal{Y}} = -P_{r,\text{max}}-K_r^{\mathcal{Z}} v_{\mathcal{Z}} \:\;\; \; \forall r\in \mathbb{C}^-\label{eq:con_br-}\\
\mathbf{1}^T v_{\mathcal{L}} + \mathbf{1}^T v_{\mathcal{Y}} = - \mathbf{1}^T v_{\mathcal{Z}} \hspace{3.1cm}  \label{eq:P_B}
\end{flalign}
where \eqref{eq:con_br+} and \eqref{eq:con_br-} are the thermal limit constraints for congested lines in $\mathbb{C}^+$ and $\mathbb{C}^-$, respectively, and \eqref{eq:P_B} is the power balance constraint.

Equations \eqref{eq:con_br+}$-$\eqref{eq:P_B} can be collected as
\begin{equation}
Av_{\mathcal{L}} + B v_{\mathcal{Y}} = d. \label{eq:ABF}
\end{equation}
where $A=[K_{\mathbb{C}}^{\mathcal{L}}; \; \mathbf{1}^T]$ and $d=[SP_{\mathbb{C},\text{max}}-K_{\mathbb{C}}^{\mathcal{Z}}v_{\mathcal{Z}}; \;-\mathbf{1}^T v_{\mathcal{Z}}]$. The matrix $S$ is a $n_c \times n_c$ diagonal matrix with $S_{kk}=1$, $\forall k\in\mathbb{C}^+$, and $S_{kk}=-1$, $\forall k\in \mathbb{C}^-$.

The dimensions of $A$ and $B$ are $\left(n_c+1\right) \times n_{\mathcal{L}}$ and $\left(n_c+1\right) \times n_{\mathcal{Y}}$, respectively. Note that the number of columns in $B$ represents the total number of buses in $\mathcal{Y}$.



Suppose that $B$ is full column rank. Thus, $B^TB$ is non-singular; that is, there exists a pseudoinverse $B^+=\left(B^TB\right)^{-1}B^T$, such that $B^+B=I$. Therefore, applying $B^+$ to \eqref{eq:ABF}, the vector $v_{\mathcal{Y}}$ can be rewritten as
	\begin{equation}
	v_{\mathcal{Y}} = -B^+Av_{\mathcal{L}} + B^+d. \label{eq:SwithVL}
	\end{equation}
	The pseudo-boundary injection $\bar{P}_{I,i}$ in \eqref{eq:PINJ_wholeNetwork} can be written as
	\begin{equation}
	\bar{P}_{I,i} = \left(J_i^{\mathcal{L}}-J_i^{\mathcal{Y}}B^+A\right) v_{\mathcal{L}} + J_i^{\mathcal{Y}}B^+d +J_i^{\mathcal{Z}}v_{\mathcal{Z}}. \label{eq:Psij}
	\end{equation}
	Therefore, the linear coefficient $F_i$ between $\bar{P}_{I,i}$ and $v_{\mathcal{L}}$ is 
	\begin{equation}
	\begin{array}{c}
	F_i=J_i^{\mathcal{L}}-J_i^{\mathcal{Y}}B^+A\\
	f_{i,0}=J_i^{\mathcal{Y}}B^+d +J_i^{\mathcal{Z}}v_{\mathcal{Z}}.\end{array} \label{eq:Coeff_F_SecV}
	\end{equation}
From \eqref{eq:Coeff_F_SecV}, we see that $F_i$ is unique and is the perfect linear predictor. The linear coefficient matrix between $\bar{P}_{I,\mathcal{B}}$ and $v_{\mathcal{L}}$ is $F=\left[F_i\right]$, $\forall i\in\mathcal{B}$.

	 Suppose that $B$ is not full column rank. Thus there exist infinitely many of $v_\mathcal{Y}$ satisfying \eqref{eq:ABF}, \textit{i.e.}, $v_{\mathcal{Y}}$ cannot be uniquely determined by $v_{\mathcal{L}}$. Therefore, the multiple linear regression will not perfectly predict the pseudo-boundary injections.
	 \end{proof}
	 
	 In Sec. \ref{sec:Simulation}, we provide a test case in IEEE 24-bus system for which $B$ is full column rank. We demonstrate that the $\hat{F}$ obtained with multiple linear regression method does indeed lead to perfect prediction of $\bar{P}_{I,\mathcal{B}}$. We also provide three counter examples (one in IEEE 24-bus system and the others in IEEE 118-bus system). For these illustrated counter-examples, $B$ satisfies $\left(n_c+1\right)< n_{\mathcal{Y}}$, which indicates that $B$ is not full column rank. However, even for a case with $B$ satisfying $\left(n_c+1\right)\geq n_{\mathcal{Y}}$, $B$ cannot be assumed to be full column rank. An example is a system with 3 buses in $\mathcal{Y}$ and 2 parallel congested lines. For this system, $\text{rank}\left(K_{\mathbb{C}}^{\mathcal{Y}}\right)=1$ since the row vectors in $K$ for the parallel lines are the same. The matrix $B$, hence, is not a full rank matrix since $\text{rank}\left(B\right)\leq 2$ and by Theorem 1, $\hat{F}$ cannot result in an accurate prediction.    

Note that $B$ does not determine the feasibility of the limited information FDI attacks. In fact, $B$ only determines whether  $\bar{P}_{I,\mathcal{B}}$ can be perfectly predicted by $v_\mathcal{L}$ or not. However, that does not mean that when $B$ is not full column rank, such attacks are infeasible. The matrix $B$ which is not full column rank may undermine the attacker's evaluation of the attack consequences via the bi-level attack optimization problem. But the attacker can still find attack vector $\bar{c}^*$ and design the attack.


\subsection{\label{sub:Validation_OPT} Upper Bound on Physical Consequences of Attack}
Although $\hat{F}$ in general cannot accurately predict $\bar{P}_{I,\mathcal{B}}$ when $B$ is not full column rank, the attacker can still utilize $\hat{F}$ in the bi-level attack optimization problem \eqref{eq:Obj1_MaxPF_1}$-$\eqref{eq:PG_limit} to predict the physical power flow on target line. However, the attacker-computed physical power flow may not match the physical power flow. The following theorem shows that even so, the attacker can compute an upper bound $P_l^{ub}$ on the physical power flow on the target line subsequent to an attack.

\begin{theorem} 
	The physical power flow on the target line $l$ resulting from attack vector $\bar{c}^*$ is upper bounded by
	\begin{equation}
	P_l^{\text{ub}}=P_{l,\text{max}}-\bar{K}_l\bar{H}\bar{c}^*. \label{eq:upperbound}
	\end{equation}
\end{theorem}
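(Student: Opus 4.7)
The plan is to bound the post-attack physical flow on target line $l$ by relating it to the cyber flow that the control center observes, and then invoking the thermal-limit constraint the control center's OPF enforces on every line. The key fact that makes this bound expressible in purely local quantities is that $\bar{c}^*$ is supported inside $\mathcal{L}$ and, by \eqref{eq:con_resources_1}, vanishes at every boundary bus, so the entire effect of the attack on line $l$'s flow collapses to $\bar{K}_l\bar{H}\bar{c}^*$.

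Denote by $\tilde{P}_l^c$ and $\tilde{P}_l^p$ the cyber and physical flow on line $l$ after the attack, and let $c^*$ be the zero-padding of $\bar{c}^*$ to the full network $\mathcal{G}$. The first step is to derive the standard unobservable-FDI identity
\begin{equation}
\tilde{P}_l^c - \tilde{P}_l^p \;=\; K_l H c^* \;=\; \bar{K}_l \bar{H}\bar{c}^*. \label{eq:CyberPhysicalDiff_proposal}
\end{equation}
The left equality is immediate from $\tilde{P}_l^c = K_l(GP_G^* - P_D + Hc^*)$ and $\tilde{P}_l^p = K_l(GP_G^*-P_D)$, where $P_G^*$ is whatever dispatch the control center settles on after the attack. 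The right equality rests on two observations: $c^*$ is nonzero only at internal buses $\mathcal{I}\subseteq\mathcal{L}$, so $Hc^*$ is supported inside $\mathcal{L}$ (equivalently, unobservability forces every perturbed measurement to lie inside the attack sub-graph $\mathcal{S}\subseteq\mathcal{L}$); and by the notational convention of the paper, the restrictions of $K$ and $H$ to rows/columns indexed by $\mathcal{L}$ are exactly $\bar{K}$ and $\bar{H}$.

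Once \eqref{eq:CyberPhysicalDiff_proposal} is in hand, I would invoke the thermal-limit constraint that the control center's OPF enforces on every cyber line flow regardless of whether the attacker's predictor $\hat{F}$ is accurate, namely $\tilde{P}_l^c \leq P_{l,\max}$. Substituting into \eqref{eq:CyberPhysicalDiff_proposal} yields
\begin{equation}
\tilde{P}_l^p \;=\; \tilde{P}_l^c - \bar{K}_l\bar{H}\bar{c}^* \;\leq\; P_{l,\max} - \bar{K}_l\bar{H}\bar{c}^* \;=\; P_l^{\text{ub}},
\end{equation}
and the right-hand side is expressible entirely in terms of quantities living inside $\mathcal{L}$, so the attacker can compute it.

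The main obstacle is the reduction $K_l H c^* = \bar{K}_l\bar{H}\bar{c}^*$ in the first step: one must argue cleanly that $Hc^*$ has no support in $\mathcal{E}$, which is precisely where the constraint $\bar{c}_{\mathcal{B}}=\mathbf{0}$ and unobservability do the real work — without them, a column of $K$ indexed by $\mathcal{E}$ (which the attacker cannot see) would enter the bound and the result would collapse. A secondary subtlety worth emphasizing is that the inequality $\tilde{P}_l^c\leq P_{l,\max}$ is guaranteed by the real full-network OPF at the control center, not by the local surrogate in \eqref{eq:Obj1_MaxPF_1}$-$\eqref{eq:PG_limit}; this is precisely what lets the upper bound survive even when the regression coefficients $\hat{F}$ mispredict the boundary injections.
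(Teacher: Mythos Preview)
Your overall strategy is exactly the paper's: write the physical flow on $l$ as the cyber flow minus an attack-induced offset and invoke the thermal-limit constraint \eqref{eq:con_powerflow} that the control center's full-network OPF imposes on the cyber flow. You also correctly emphasize that the inequality $\tilde{P}_l^c\le P_{l,\max}$ comes from the real OPF, not the local surrogate, and you rightly flag the reduction $K_lHc^*=\bar{K}_l\bar{H}\bar{c}^*$ as the crux.

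The gap is in how you justify that reduction. The assertion that ``the restrictions of $K$ and $H$ to rows/columns indexed by $\mathcal{L}$ are exactly $\bar{K}$ and $\bar{H}$'' is false. The bar quantities are not sub-blocks of the full-network matrices; they are the susceptance and PTDF matrices of the standalone sub-network $\mathcal{L}$, computed as if $\mathcal{E}$ did not exist. PTDF entries depend on the entire electrical network, so $K_l^{\mathcal{L}}\neq\bar{K}_l$ in general, and the diagonal entries of $H$ and $\bar{H}$ differ at every boundary bus. Your observation that $Hc^*$ is supported in $\mathcal{L}$ is correct (thanks to $\bar{c}^*_{\mathcal{B}}=\mathbf{0}$), and one can verify $(Hc^*)_{\mathcal{L}}=\bar{H}\bar{c}^*$ despite the diagonal mismatch in $H$, because only columns at internal buses are used. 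But from $K_l^{\mathcal{L}}\,\bar{H}\bar{c}^*$ you cannot pass to $\bar{K}_l\,\bar{H}\bar{c}^*$ by a sub-block argument; the two row vectors multiplying $\bar{H}\bar{c}^*$ are genuinely different.

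The paper closes this gap by returning to states rather than injections. Using $K=\Gamma H^{+}$ and $\bar{K}=\bar{\Gamma}\bar{H}^{+}$ together with $\Gamma_l\mathbf{1}=0$ (so that $\Gamma_lH^{+}H=\Gamma_l$), one gets $K_lHc^*=\Gamma_lc^*$ and $\bar{K}_l\bar{H}\bar{c}^*=\bar{\Gamma}_l\bar{c}^*$. Unlike PTDF rows, $\Gamma_l$ and $\bar{\Gamma}_l$ are genuinely local objects: each is nonzero only at the two end buses $l_f,l_t$ of line $l$, with identical entries $\pm 1/x_l$. Since both end buses lie in $\mathcal{L}$ and $c^*$ agrees with $\bar{c}^*$ there, $\Gamma_lc^*=\bar{\Gamma}_l\bar{c}^*$, which delivers the reduction. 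This state-space step via $\Gamma$ is the missing idea in your proposal.
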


\begin{proof}
  Solving the attack optimization problem \eqref{eq:Obj1_MaxPF_1}$-$\eqref{eq:PG_limit}, the attacker can obtain the optimal attack vector $\bar{c}^*$. The resulting attack vector for the whole system is $c^*$, where $c_i^*=\bar{c}^*_i$ for $i\in\mathcal{L}$ and $c^*_i=0$ for $i \in \mathcal{E}$. Injecting $c^*$ in the system will result in a system re-dispatch determined by \eqref{eq:OBJ_MINCOST}$-$\eqref{eq:con_GENlimit}. The difference between the physical and cyber power flows ($P_l^p$ and $P_l^c$, respectively) on target line $l$ after the post-attack system re-dispatch is
 \begin{equation}
P_l^p - P_l^c = -K_l Hc^*. \label{eq:global_dPF}
 \end{equation} 
Thus, the physical power flow on target line $l$ satisfies
\begin{equation}
P_l^p = P_l^c -K_l  Hc^*\leq P_{l,\text{max}}-K_l  Hc^*. \label{eq:global_dPF1}
\end{equation} 
where the upper bound follows from the thermal limit constraint on $P_l^c$ in \eqref{eq:con_powerflow}.
Note that $K_l$ and $H$ are unknown to the attacker with limited information. However, the attacker has the knowledge of $\bar{K}_l$ and $\bar{H}$. We now show that the upper bound in \eqref{eq:global_dPF1} is equivalent to $P_l^{\text{ub}}$ defined in \eqref{eq:upperbound}.

The PTDF matrices $\bar{K}$ and $K$ satisfy the following
\begin{equation}
\bar{K} = \bar{\Gamma}\bar{H}^+ \label{eq:localPTDF_H2}
\end{equation}
\begin{equation}
K = \Gamma H^+ \label{eq:PTDF_H2}
\end{equation}
where $\Gamma$ and $\bar{\Gamma}$ are the dependency matrices between power flow measurements and voltage angle states in $\mathcal{G}$ and $\mathcal{L}$, respectively, $H^+$ and $\bar{H}^+$ are the pseudoinverse of $H$ and $\bar{H}$, respectively. Note that for target line $l$, both $\Gamma_l$ and $\bar{\Gamma}_l$ have only two non-zero elements corresponding to the two end buses of $l$ (denoted as buses $l_f$ and $l_t$, respectively). In particular, $\Gamma_l^{l_f}=\bar{\Gamma}_l^{l_f}=-\Gamma_l^{l_t}=-\bar{\Gamma}_l^{l_t}=\frac{1}{x_l}$, where $x_l$ is the line impedance of line $l$. Thus,
\begin{equation}
\bar{K}_l\bar{H}\bar{c}^* = \bar{\Gamma}_l\bar{c}^* = \Gamma_lc^* =  K_lHc. \label{eq:dPL_equal}
\end{equation}
Therefore, the right-hand side of \eqref{eq:global_dPF1} is exactly equal to $P_l^{\text{ub}}$. This proves the upper bound in \eqref{eq:upperbound}. Moreover, $P_l^{\text{ub}}$ can be computed by the attacker, since it requires knowledge only of the local network $\mathcal{L}$ and the attack vector $\bar{c}^*$.
\end{proof}

Note that 
from \eqref{eq:Physical_PF_1} and \eqref{eq:ThermalLimit_Local}, the attacker can compute the difference between the physical and cyber power flows on target line $l$ ($\bar{P}_l^p$ and $\bar{P}_l^c$, respectively) solved with limited information attack optimization as
\begin{equation}
\bar{P}_l^p - \bar{P}_l^c = -\bar{K}_l \bar{H}\bar{c}^*. \label{eq:local_dPF}
\end{equation}
Thus, the difference between physical and cyber power flows seen by the attacker and the system are the same
\begin{equation}
\bar{P}_l^p - \bar{P}_l^c=P_l^p-P_l^c. \label{eq:local_dPF1}
\end{equation}

\section{\label{sec:Simulation}Numerical Results}

In this section, we illustrate the efficacy of the attacks designed with
the method proposed in Sec. \ref{sec:LocalAttack}. To this end, we first compute the coefficient matrix with historical data using the multiple linear regression method. Subsequently, we solve the optimization problem to find the optimal attack vector $\bar{c}^*$ inside $\mathcal{L}$. Finally, we test the physical consequences of the attack vector $\bar{c}^*$ on the entire network $\mathcal{G}$. The
test systems include the IEEE 24-bus reliability test system (RTS) and the IEEE 118-bus system from MATPOWER v4.1. In particular, the line rating data for IEEE 118-bus system is adopted from \cite{IEEE118Rating}. 
The whole network DC OPF and limited information attack algorithm is implemented with Matlab. The
optimization problem is solved with CPLEX.

To model realistic power systems, we assume
that there are congested lines prior to the attack and the attacker
chooses one of them in $\mathcal{L}$ as the target to maximize power flow. This is achieved in simulation by uniformly reducing all line ratings by 50\% for IEEE 24-bus RTS and 45\% for IEEE 118-bus system. 

We illustrate our results for the following choice of parameters: 
the weight of the $l_{1}$-norm of attack vector in \eqref{eq:Obj1_MaxPF_1}, $\zeta$, is set to 1\% of the original power flow value of the target line; and the load shift factor in \eqref{eq:con_loadshift_1}, $\tau$, is set to 10\%. We assume that the attacker can obtain 200 instances of historical data inside $\mathcal{L}$. 

We focus on two scenarios for the historical data: 
\begin{itemize}
	\item \textbf{Scenario 1 - Constant Loads in $\mathcal{E}$:} In each instance of data, loads in $\mathcal{E}$ remain unchanged while loads in $\mathcal{L}$ varies as a percent $p$ of the base load, where $p$ is independent $\mathcal{N}(0, 10\%)$. That is, power injections vary only at buses with marginal generators (denoted $\mathcal{E}_M$). Therefore, in this scenario, the set $\mathcal{Y}$ introduced in Sec. \ref{sec:validation} is given by $\mathcal{Y}=\mathcal{E}_M$. The number of buses in $\mathcal{E}_M$ is denoted by $n_{\mathcal{E}_M}$
	\item \textbf{Scenario 2 - Varying Loads in the entire network $\mathcal{G}$:} In each instance of data, loads in both $\mathcal{L}$ and $\mathcal{E}$ vary as a percent $p$ of the base load, with $p$ chosen independently for each load as $\mathcal{N}(0, 10\%)$. In this scenario, power injections at all buses in $\mathcal{E}$ vary in the historical data, \textit{i.e.,} $\mathcal{Y}=\mathcal{E}$. 
\end{itemize}
Note that the data in both scenarios also satisfy the following assumptions: (i) the topology for all the historical data remains the same, (ii) the historical generation dispatches data in both scenarios satisfies OPF, and (iii) there exists a subset of buses $\mathcal{Z}$ in $\mathcal{E}$, for which power injections remain constant in the historical data. An example of attack is that an attacker hacks into the system and collects the data from 12:00 p.m. to 2:00 p.m. for the entire month of July and then launches a FDI attack at the end of the month. 
\begin{table*}[t]
	\renewcommand{\arraystretch}{1.3}
	\protect\caption{Summary of the Test Systems\label{tab:TestComparison}}
	\centering
	\vspace{-0.2cm}
	\begin{tabular}{>{\centering}m{1.5cm}|>{\centering}m{1.8cm}|>{\centering}m{1.6cm}|>{\centering}m{1cm}|>{\centering}m{1.8cm}|>{\centering}m{1.5cm}|>{\centering}m{1cm}}
		\thickhline
		\multirow{2}{*}{} & \multicolumn{3}{c|}{Scenario 1} & \multicolumn{3}{c}{Scenario 2} \tabularnewline
		\hhline{~------}
		Test System & \# of Congested Lines ($n_c$)& \# of Buses in $\mathcal{E}_M$ ($n_{\mathcal{E}_M}$)& $\text{rank}(B)$ & \# of Congested Lines ($n_c$) & \# of Buses in $\mathcal{E}$ ($n_{\mathcal{E}}$) &$\text{rank}(B)$ \tabularnewline
		\hline
		24-bus & 2 & 2 & 2  & 2 & 16 & 3   \tabularnewline
		\hline 
		118-bus & 3 & 5 & 4  & 3 & 71 & 4   \tabularnewline
		\thickhline 
	\end{tabular}
\end{table*}

\subsection{\label{sub:IEEE24}Results for IEEE 24-Bus RTS System}
In this subsection, we present attack consequences on the IEEE 24-bus RTS system for Scenarios 1 and 2. The sub-network $\mathcal{L}$ is illustrated in Fig. \ref{fig:LocalAttack}. In each scenario, we compare the attack consequences on target line 28 determined by the optimization problems for two cases: (i) complete system knowledge as in \cite{Liang2015} (identified as \textit{global case}), and (ii) limited system knowledge (henceforth identified as \textit{local case}). For local case, we compare the physical power flow $P_l^p$ and the attacker-computed physical power flow $\bar{P}_l^p$. The results of attacks are illustrated in Fig. \ref{fig:Sc1_Line28_IEEE24}. We illustrate the difference between the physical and the attacker-computed pseudo-boundary injections in Fig. \ref{fig:PBI_IEEE24}.

\begin{figure}[h]
	\centering{}\includegraphics[trim=0 0.3cm 0 0.5cm,scale=0.65]{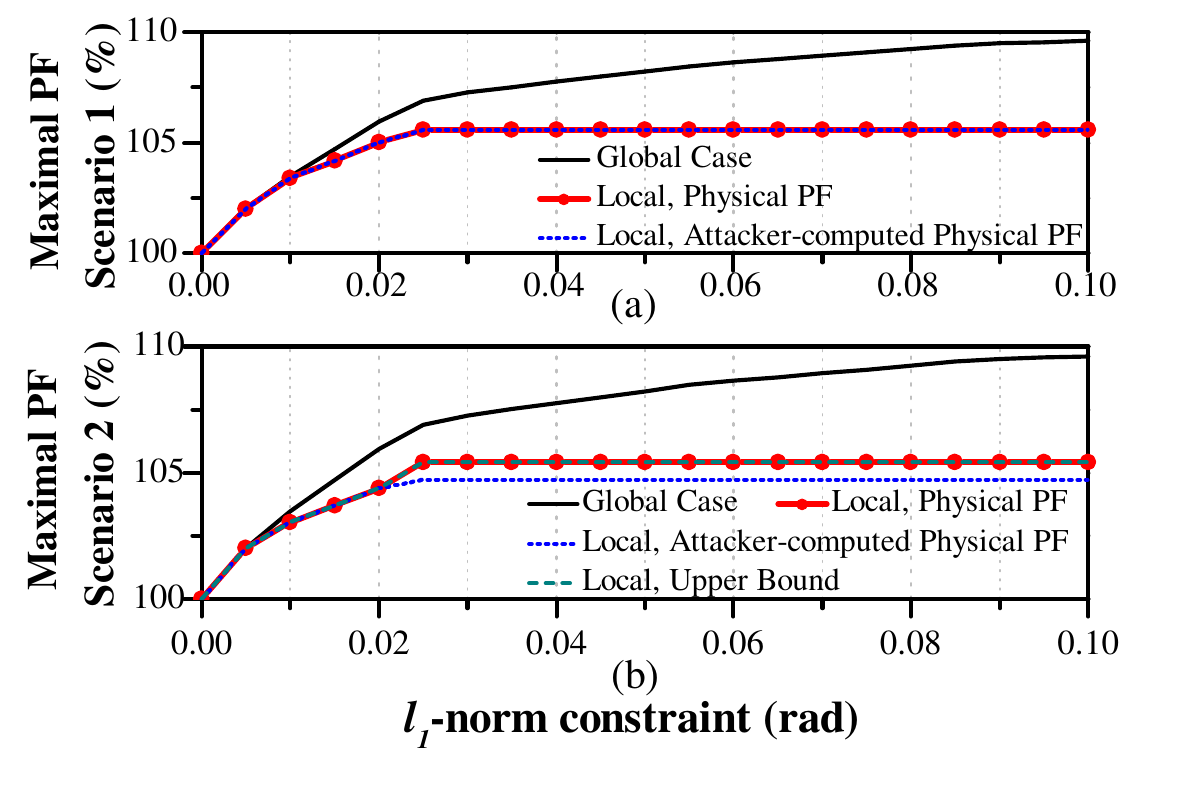}\protect\protect	\vspace{-0.3cm}\caption{The maximum power flow (PF) v.s. the $l_1$-norm constraint (N1) when
		target line is 28 of IEEE 24-bus system for (a) Scenario 1, and (b) Scenario 2 historical data. \label{fig:Sc1_Line28_IEEE24}}
\end{figure}
\begin{figure}[h]
	\centering{}\includegraphics[trim=0 0.3cm 0 0.6cm,scale=0.65]{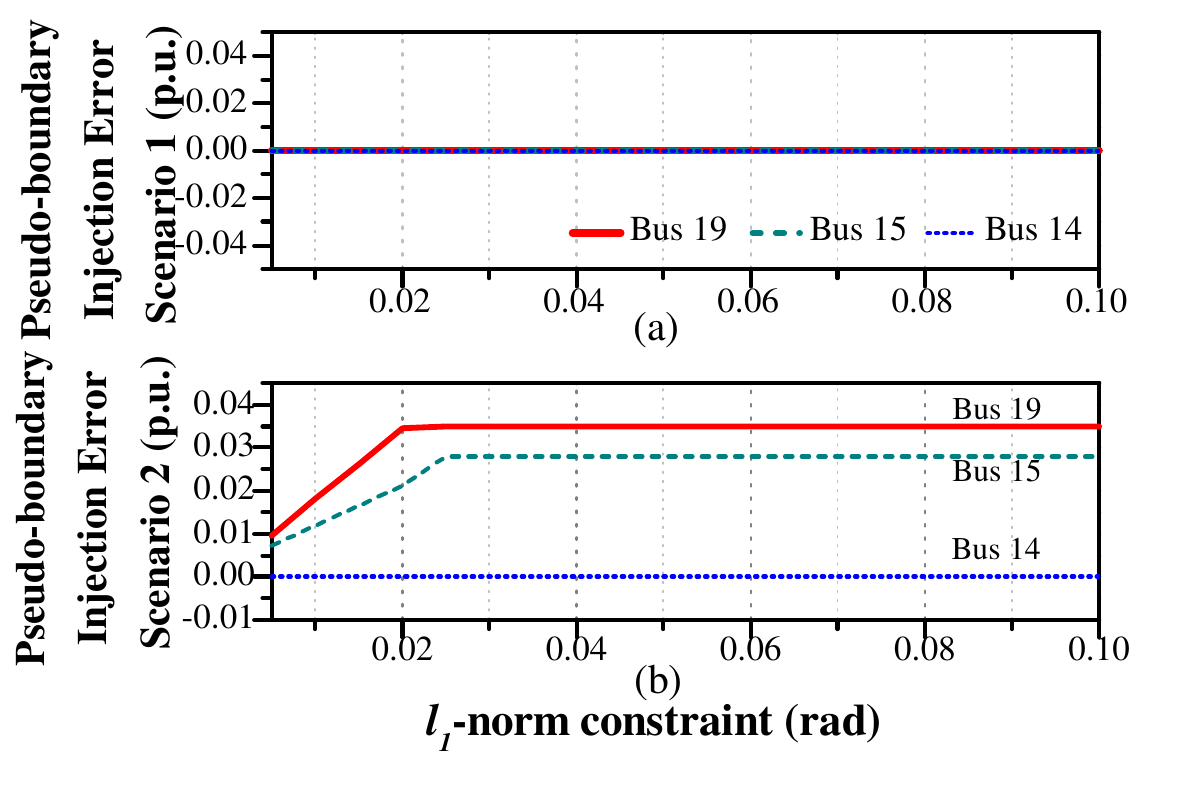}\protect\protect	\vspace{-0.3cm}\caption{The pseudo-boundary power injection error v.s. the $l_1$-norm constraint (N1) when
		target line is 28 of IEEE 24-bus system for (a) Scenario 1, and (b) Scenario 2 historical data. \label{fig:PBI_IEEE24}}
	\vspace{-0.2cm}
\end{figure}
In Figs. \ref{fig:Sc1_Line28_IEEE24}(a) and (b), we note that the solutions for the local case is sub-optimal relative to that for the global case. The reason is that as $N_{1}$ is relaxed, getting a larger overflow on the target line requires measurements in both $\mathcal{L}$ and $\mathcal{E}$ to be modified. Therefore, the constraint on limited attack resources prevents any further increase in the maximal target line flow for the local case. 

The parameters of the test system are summarized in Table \ref{tab:TestComparison}. The historical data in Scenario 1 satisfies $\text{rank}(B) = n_{\mathcal{E}_M}$. Thus, by Theorem 1, the pseudo-boundary power injections are perfectly predicted by the multiple linear regression method, which explains why the attacker-computed system response post-attack is the same as the actual response, as illustrated in Figs. \ref{fig:Sc1_Line28_IEEE24}(a) and \ref{fig:PBI_IEEE24}(a). 

Furthermore, Table \ref{tab:TestComparison} shows that for the historical data in Scenario 2, $\text{rank}(B)< n_{\mathcal{E}}$. Thus, by Theorem 1, the predictions of pseudo-boundary injections by the multiple linear regression are not accurate and there will be mismatches between the actual and the attacker-computed system response post-attack. This is verified by the non-zero pseudo-boundary power injection differences shown in Fig. \ref{fig:PBI_IEEE24}(b). In Fig. \ref{fig:Sc1_Line28_IEEE24}(b), in addition to plotting the attacker-computed physical power flow, we also plot the upper bound on physical power flow. From Fig. \ref{fig:Sc1_Line28_IEEE24}(b), we observe that although there are mismatches between the actual and attacker-computed system response under attack, the upper bound found in Sec. \ref{sub:Validation_OPT} exactly matches the physical power flow.     


\subsection{\label{sub:IEEE118}Results for IEEE 118-Bus System}
 
In this subsection, we test the consequences of attacks on the IEEE 118-bus system. The details of sub-network $\mathcal{L}$ are listed in Table \ref{tab:118sub-network}. The results of attacks designed with historical data in Scenarios 1 and 2 are illustrated in Fig. \ref{fig:Sc1_Line5_IEEE118} with sub-plots (a) and (b), respectively. The difference between the physical and the attacker-computed pseudo-boundary injections at 3 of 18 boundary buses, buses 23, 70, and 80, for both scenarios are illustrated in Fig. \ref{fig:PBI_IEEE118} with sub-plots (a) and (b), respectively.
\begin{figure}[h]
	\centering{}\includegraphics[trim=0 0.7cm 0 0.5cm,scale=0.65]{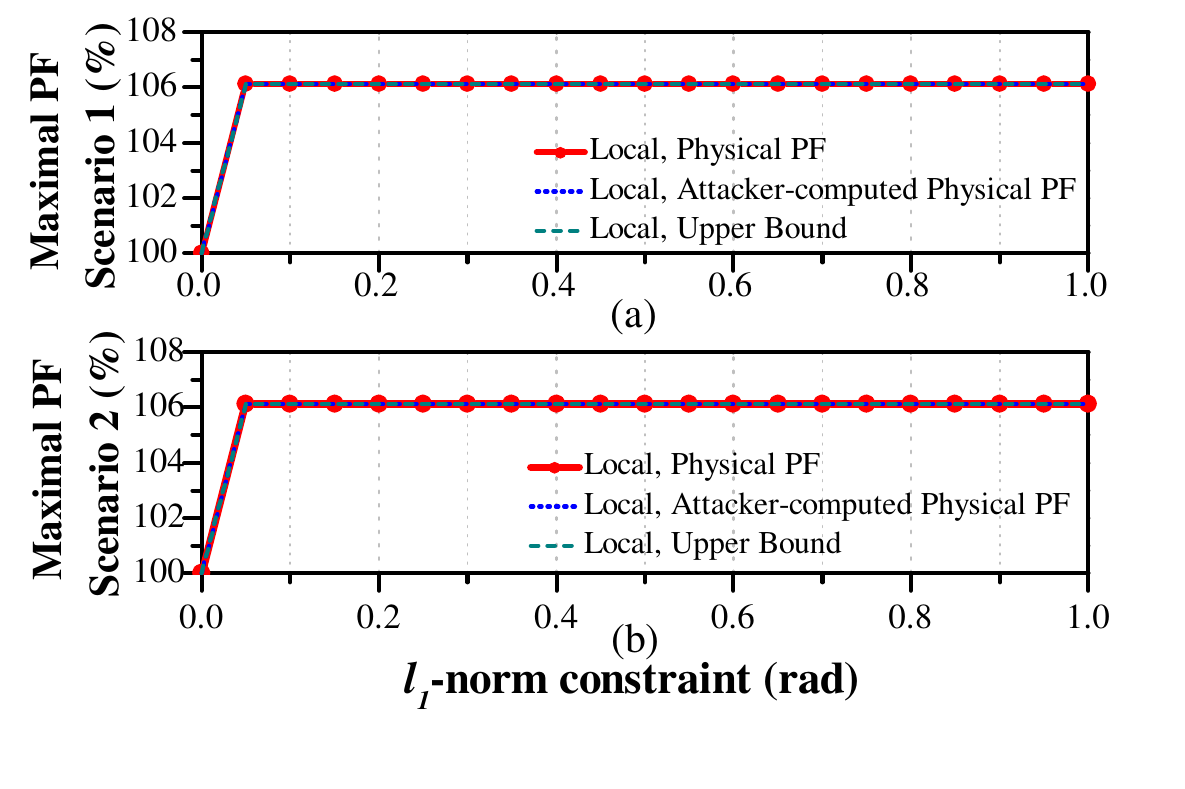}\protect\protect	\vspace{-0.3cm}\caption{The maximum power flow (PF) v.s. the $l_1$-norm constraint (N1) when
		target line is 5 of IEEE 118-bus system for (a) Scenario 1, and (b) Scenario 2 historical data. \label{fig:Sc1_Line5_IEEE118}}
\end{figure}
\begin{figure}[h]
	\centering{}\includegraphics[trim=0 0 0 0.5cm,scale=0.65]{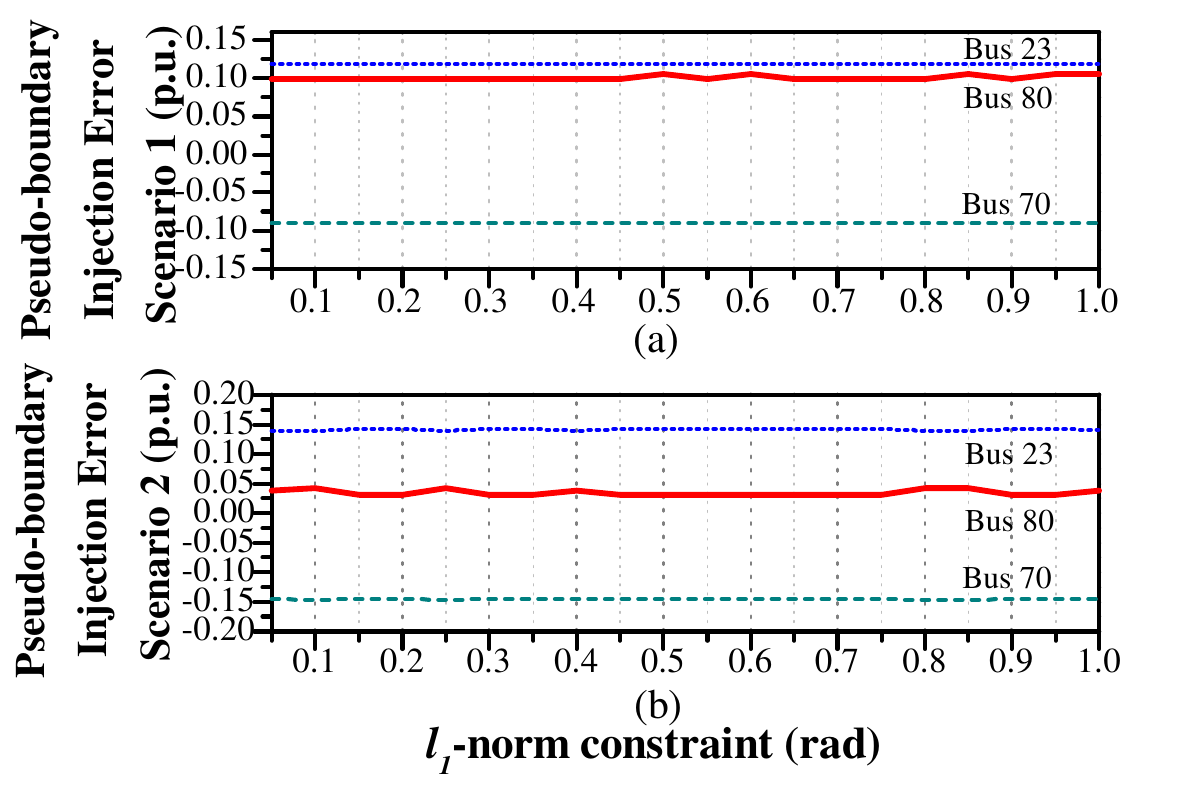}\protect\protect	\vspace{-0.3cm}\caption{The pseudo-boundary power injection error v.s. the $l_1$-norm constraint (N1) when
		target line is 5 of IEEE 118-bus system for (a) Scenario 1, and (b) Scenario 2 historical data. \label{fig:PBI_IEEE118}}
\end{figure}
The parameters of the test system are also summarized in Table \ref{tab:TestComparison}. Note that for historical data in both scenarios, $B$ does not have full column rank. Therefore, Theorem 1 predicts a mismatch between physical and attacker-computed pseudo-boundary injections. This is verified by Fig. \ref{fig:PBI_IEEE118}, which shows the pseudo-boundary injection error. In Figs. \ref{fig:Sc1_Line5_IEEE118}(a) and (b), we find that in both scenarios, both the attacker-computed physical power flow and the upper bound match the physical power flow. 
This case demonstrates that even though there are mismatches between physical and attacker-computed pseudo-boundary injections, the attacker-computed physical power flow can still be correct. Note that, in this case, both the cyber power flow and the attacker-computed cyber power flow reach the limit post-attack since the target line is congested before attack. Therefore, from \eqref{eq:local_dPF1}, the attacker-computed physical power flow is the same as the physical power flow.
	\begin{table}[h]
		\renewcommand{\arraystretch}{1.3}
		\protect\caption{Summary of the Attack Sub-network in IEEE 118-bus System\label{tab:118sub-network}}
		\centering
		\begin{tabular}{|>{\centering}p{1.2cm}|>{\centering}p{6.8cm}|}
			\hline
			Buses & 1-14, 16, 17, 23, 25-27,	30,	33-35,	37-40,	47,	49,	59-66,	68-70,	75,	77,	80,	81,	116, 117 \tabularnewline
			\hline
			Lines & 1-17, 20, 22, 31-33,	36-38,	47,	48,	50-55,	65,	88-100,	102,	104-108,	115,	116,	119,	120,	123,	124,	126,	127,	183, 184 \tabularnewline
			\hline 
			Boundary Buses & 13,	14, 17,	23,	27,	33-35,	40,	47,	49,	59,	62,	66,	70,	75,	77,	80\tabularnewline
			\hline 
		\end{tabular}
	\end{table}

\subsection{Attack Sensitivity to Topology Change}	
In this subsection, we evaluate the efficacy of attacks generated from historical data with topologies that are different from the real-time topology. We assume that the attacker uses the historical datasets in Secs. \ref{sub:IEEE24} and \ref{sub:IEEE118} to compute the coefficient matrices, and is not aware of a line outage in $\mathcal{E}$ in real-time.  We exhaustively test the consequences of the attacks designed with the computed coefficient matrices on all possible real-time topologies with one line outage in $\mathcal{E}$. Note that the topology changes that will result in infeasible pre-attack DC OPF solution are not considered here. The $l_1$-norm constraints ($N_1$) are chosen as $0.05$ and $0.4$ for the IEEE 24-bus and 118-bus systems, respectively. We compare the results with those in Secs. \ref{sub:IEEE24} and \ref{sub:IEEE118} and summarize them in Table. \ref{tab:SA_topology}. Note that for the pseudo-boundary power injection errors, we compare the $l_2$-norm of the errors on all boundary buses for each test case since there are multiple boundary buses in the test system. From the table, we can observe that if the attacker uses the coefficient matrix computed from historical data with different topology to design attacks, its evaluation of attack consequences may be undermined. Specifically for the case with Scenario 1 historical data in the IEEE 24-bus system, the attacker cannot obtain perfect prediction on pseudo-boundary injections any more since the real-time topology differs from that in the historical data.  However, for most of the test cases, \textit{i.e.,} 86.47\% and 98.26\% cases for the IEEE 24-bus and 118-bus systems, respectively, the attacker can still cause line overflow with the inaccurate coefficient matrices.

 \begin{table}[t]
 	\renewcommand{\arraystretch}{1.3}
 	\protect\caption{Summary of the Sensitivity Analysis Results under Different Topologies\label{tab:SA_topology}}
 	\centering
 	\vspace{-0.2cm}
 	\begin{tabular}{>{\centering}m{1.5cm}|>{\centering}m{1.2cm}|>{\centering}m{1.4cm}|>{\centering}m{1.4cm}|>{\centering}m{1.4cm}}
 		\thickhline
 		
 		Test System \& Scenario & \# of Total Test Cases & \% of Cases with Physical Overflow Decreases& \% of Cases without Physical Overflow & \% of Cases with Prediction Error Increases \tabularnewline
 		\hline
 		24-bus SC1 &  22 & 18.18\% & 13.63\% & 100\%    \tabularnewline
 		\hline 
 		24-bus SC2&  22 & 9.1\%  & 13.63\% & 90.91\%    \tabularnewline
 		\hline 
 		118-bus SC1&  115 &  0  & 0 & 63.48\%    \tabularnewline
 		\hline
 		118-bus SC2&  115 & 0  & 1.74\% & 10.43\%    \tabularnewline
 		\thickhline 
 	\end{tabular}
 	
 	SC: Scenario
 \end{table}

\subsection{Verification on AC Power Flow Model}
In this subsection, we test the performance of the proposed attack strategies on AC power flow model. We first compare the attack consequences of the DC attacks in Secs. \ref{sub:IEEE24} and \ref{sub:IEEE118} and the corresponding AC attacks in Fig. \ref{fig:DCtoAC}. The AC attacks are computed with the designed DC attack vector $c$ which satisfy \eqref{eq:DCtoAC}. The system re-dispatch in response to each AC attack is via AC OPF. This figure validates that although the attack vector is solved by a linear optimization problem,
	it can still cause overflows in the AC system and the AC attack consequences track those of the original DC attacks.
\begin{figure}
	\centering{}\includegraphics[scale=0.65]{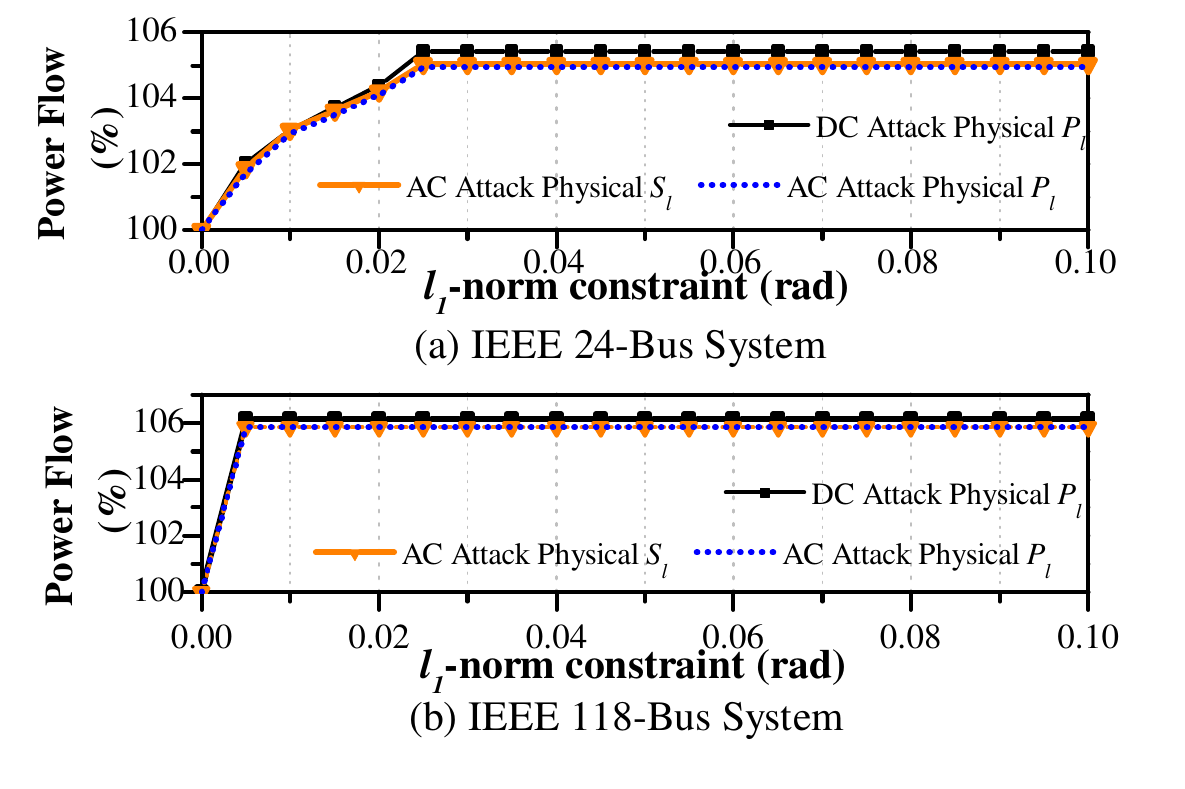}\protect\protect	\vspace{-0.3cm}\caption{Comparison of the maximum power flow of DC and AC attacks. \label{fig:DCtoAC}}
\end{figure}

In addition, the impact of AC power flow historical data on the attack consequences are studied. We randomly generate a historical dataset, in which each instance is based on AC power flow model and satisfies all assumptions in Scenario 2. That is, the topology in each instance of historical data is the same with the real-time topology. We then compute the coefficient matrix, solve the optimization problem to find the optimal attack, and test the physical consequences of the attack. The $l_1$-norm constraints ($N_1$) are chosen as $0.05$ and $0.4$ for the IEEE 24-bus and 118-bus systems, respectively. We repeat this process 100 times and illustrate the results in Table \ref{tab:SA_ACPF}. From this table, it can be seen that historical datasets with AC power flow data can reduce the prediction accuracy of the pseudo-boundary injections and the target line physical power flow. However, for both test systems, 100\% of the designed attacks can result in physical target line overflows. These results demonstrate the robustness of the proposed attack strategy on AC power flow model. 
 \begin{table}[h]
	\renewcommand{\arraystretch}{1.3}
	\protect\caption{Summary of the Sensitivity Analysis Results under AC Power Flow Historical Datasets\label{tab:SA_ACPF}}
	\centering
	\vspace{-0.2cm}
	\begin{tabular}{>{\centering}m{1.2cm}|>{\centering}m{1.4cm}|>{\centering}m{1.4cm}|>{\centering}m{1.4cm}}
		\thickhline
		Test System & \% of Cases with Physical Overflow & \% of Cases without Physical Overflow Decreases& \% of Cases with Prediction Error Increases \tabularnewline
		\hline
		24-bus&  100\% & 93\% & 46\%    \tabularnewline
		\hline 
		118-bus&  100\%  & 0 & 100\%    \tabularnewline
		\thickhline 
	\end{tabular}
\end{table}

\section{Conclusion}
In this paper, we have studied the physical system consequences of a class of unobservable limited information FDI attacks. We assume the attacker can design the attacks using historical data including topology, generation dispatch and load information only inside an attack sub-network $\mathcal{L}$ and can modify measurements within an attack sub-graph $\mathcal{S}$ inside $\mathcal{L}$ with counterfeits. The attacks are designed based on DC power flow model. We have introduced pseudo-boundary injections to represent the power flows delivered from the external network and developed a multiple linear regression model for the attacker to learn the relationship between pseudo-boundary injections and the power injections inside the attack sub-network. We have formulated a bi-level optimization
problem to maximize the power flow on a chosen target line with attacker's perfect information in the attack sub-network as well as the predicted pseudo-boundary injections. Our results illustrate that the attacker can overload transmission lines with the proposed bi-level attack optimization problems. In conclusion, one must be concerned that even with limited information, an attacker with access to historical data can take advantage of it to the detriment of reliable system operations. Future work will include (a) employing the attack optimization structure to achieve other attack consequences such as maximizing total operation costs, load shedding, or physical interface power flow that can result in voltage collapse, (b) performing detailed sensitivity analysis of this method with different scenarios of historical data, (c) extending this work to analyze the vulnerability of limited information cyber-physical topology-and-state attacks as introduced in our prior work \cite{Zhang2016TSG}, and (d) designing detection mechanisms to identify both the critical sub-graphs that are the most vulnerable to FDI attacks and anomalies inside these sub-graphs with machine learning algorithms.

\section*{Acknowledgment}
This material is based upon work supported by the National Science Foundation under Grant No. CNS-1449080.


\ifCLASSOPTIONcaptionsoff
  \newpage
\fi



%
\bibliographystyle{IEEEtran}
\bibliography{dis}

%
\begin{IEEEbiography}[{\includegraphics[width=1in,height=1.25in,clip,keepaspectratio]{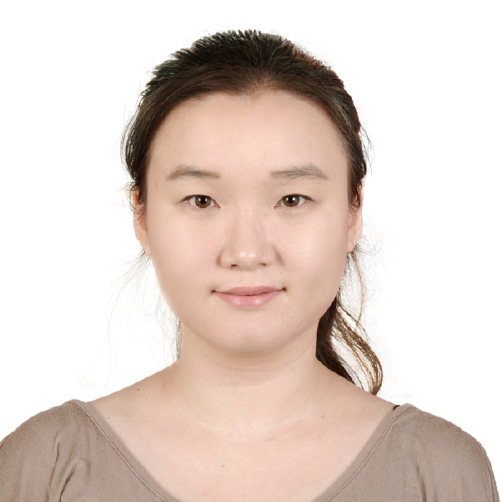}}]%
	{Jiazi Zhang} (S'14--M'17) received the B.S. degree from Shandong University, Jinan, China in 2012, the M. S. degree, and the Ph.D degree from Arizona State University, Tempe, AZ, USA in 2015 and 2017, respectively. 
	
	She is presently a Postdoctoral Researcher at National Renewable Energy Laboratory (NREL). Previously, she was an Assistant Postdoctoral Scholar at Arizona State University.
	Her research interests include cyber security of smart grid, optimization, energy market, and electrical energy storage.
\end{IEEEbiography}

\begin{IEEEbiography}[{\includegraphics[width=1in,height=1.25in,clip,keepaspectratio]{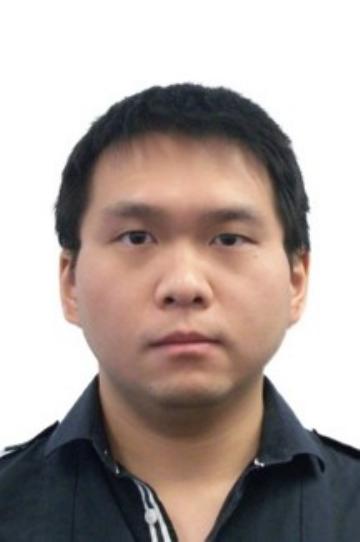}}]%
	{Zhigang Chu} (S'15) received the B.Eng. degree from Xi'an Jiaotong University, Xi'an, China and the M.S.E. degree from Arizona State University, Tempe, AZ, in 2012 and 2014, respectively.
	
	He is currently pursuing the Ph.D. degree in the School of Electrical, Computer, and Energy Engineering at Arizona State University. His research interests include cyber security of power systems, optimization, and energy markets.
\end{IEEEbiography}

\begin{IEEEbiography}[{\includegraphics[width=1in,height=1.25in,clip,keepaspectratio]{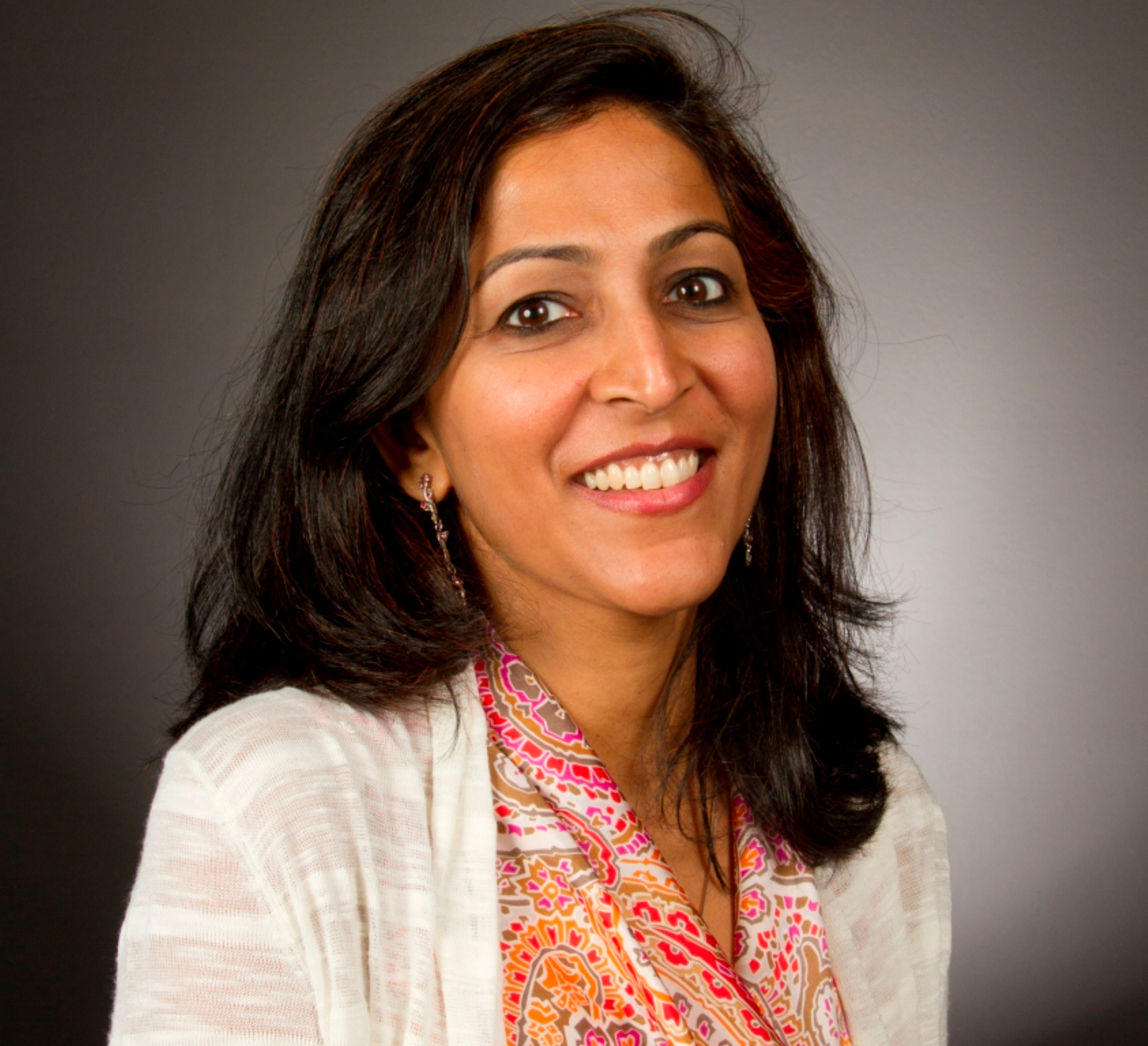}}]%
	{Lalitha Sankar} (S'02--M'07--SM'15) received the B.Tech degree from the Indian Institute of Technology, Bombay, the M.S. degree from the University of Maryland, and the Ph.D degree from Rutgers University. 
	
	She is presently an Assistant Professor in the School
	of Electrical, Computer, and Energy Engineering at Arizona State University. Previously, she was an Associate Research Scholar at Princeton University. Following her doctorate, Dr. Sankar was a recipient of a three year Science and Technology teaching postdoctoral fellowship from the Council on Science and Technology at Princeton University. Her research interests include information privacy and security in distributed and cyber-physical systems. 
	For her doctoral work, she received the 2007-2008 Electrical Engineering Academic Achievement Award from Rutgers University.  She is a recipient of the NSF CAREER award for 2014.
\end{IEEEbiography}

\begin{IEEEbiography}[{\includegraphics[width=1in,height=1.25in,clip,keepaspectratio]{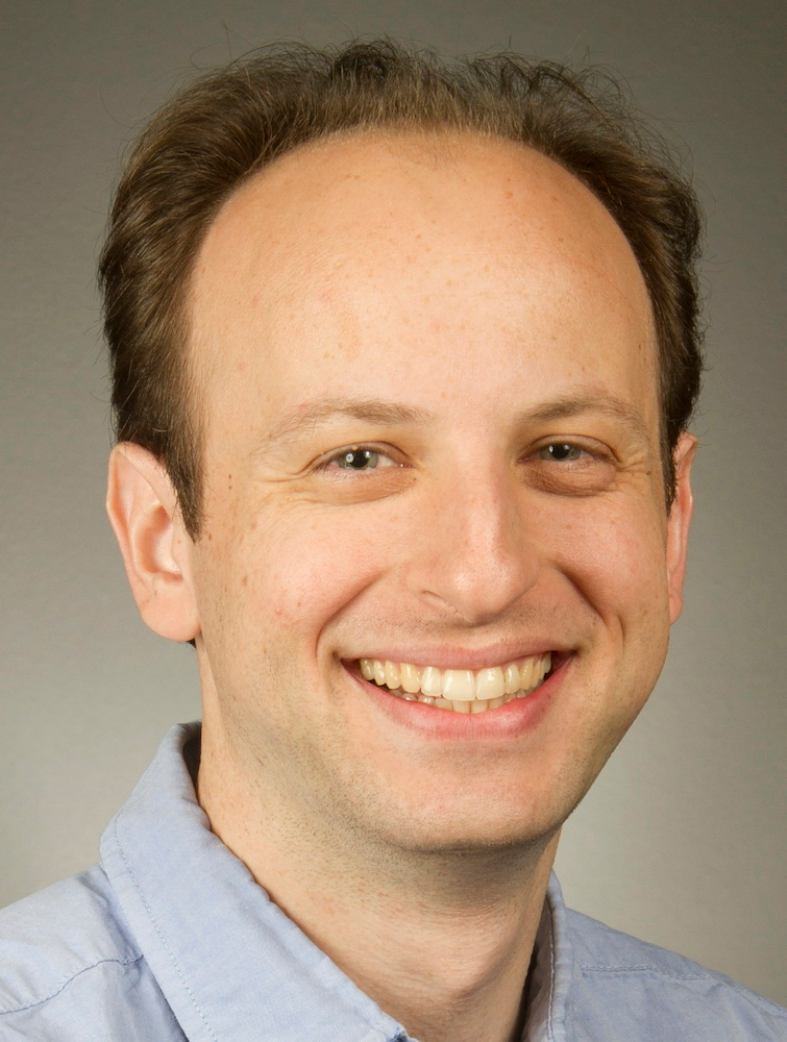}}]%
	{Oliver Kosut} (S'06--M'10) received B.S. degrees in electrical engineering and mathematics from the Massachusetts Institute of Technology, Cambridge, MA in 2004 and the Ph.D. degree in electrical and computer engineering from Cornell University, Ithaca, NY in 2010.
	
	Since 2012, he has been an Assistant Professor in the School of Electrical, Computer Engineering at Arizona State University, Tempe, AZ. Previously, he was a Postdoctoral Research Associate in the Laboratory for Information and Decision Systems at MIT from 2010 to 2012. His research interests include information theory, cyber-security, and power systems.
	
	Prof. Kosut received the NSF CAREER award in 2015.
\end{IEEEbiography}

%
%
%




\end{document}